\providecommand{\U}[1]{\protect\rule{.1in}{.1in}}
\newtheorem{theorem}{Theorem}
\newenvironment{proof}[1][Proof]{\noindent\textbf{#1.} }{\ \rule{0.5em}{0.5em}}
\begin{document}
\title[Bianchi \textrm{VI}$_{0}$ in Scalar and Scalar-Tensor Cosmologies]{Bianchi \textrm{VI}$_{0}$ in Scalar and Scalar-Tensor cosmologies}
\author{J.A. Belinch\'{o}n}
\affiliation{Departamento de F\'{\i}sica At\'{o}mica, Molecular y
Nuclear. Ciencias F\'{\i}sicas. Universidad Complutense de Madrid,
E-28040 Madrid, Espa\~{n}a} \keywords{Scalar and Scalar-Tensor
models, Bianchi VI$_{0}$, Self-similar solutions, Time varying
constants} \pacs{PACS number}

\begin{abstract}
We study several cosmological models with Bianchi \textrm{VI}$_{0}$ symmetries
under the self-similar approach. In order to study how the \textquotedblleft
constants\textquotedblright\ $G$ and $\Lambda$ may vary, we propose three
scenarios where such constants are considered as time functions. The first
model is a perfect fluid. We find that the behavior of $G$ and $\Lambda$ are
related. If $G$ behaves as a growing time function then $\Lambda$ is a
positive decreasing time function but if $G$ is decreasing then $\Lambda$ is
negative. For this model we have found a new solution. The second model is a
scalar field, where in a phenomenological way, we consider a modification of
the Klein-Gordon equation in order to take into account the variation of $G$.
Our third scenario is a scalar-tensor model. We find three solutions for this
models where $G$ is growing, constant or decreasing and $\Lambda$ is a
positive decreasing function or vanishes. We put special emphasis on
calculating the curvature invariants in order to see if the solutions isotropize.

\end{abstract}
\date{\today}
\maketitle

\section{Introduction}

Current observations of the large scale cosmic microwave background (CMB)
suggest to us that our physical universe is expanding in an accelerated way,
isotropic and homogeneous models with a positive cosmological constant. The
analysis of CMB fluctuations could confirm this picture. But other analyses
reveal some inconsistencies. Analysis of WMAP data sets shows us that the
universe might have a preferred direction. For this reason, it may be
interesting to study Bianchi models since these models may describe such anisotropies.

The observed location of the first acoustic peak of the temperature
fluctuations on the CMB corroborated by the data obtained in different
experiments \cite{bernadis}, indicates that the universe is dominated by an
unidentified \textquotedblleft dark energy\textquotedblright\ and suggests
that this unidentified dark energy has a negative pressure \cite{perlmutter}.
This last characteristic of the dark energy points to the vacuum energy or
cosmological constant as a possible candidate for dark energy.

In modern cosmological theories, the cosmological constant remains a focal
point of interest (see \cite{cc1}-\cite{cc4}\ \ for reviews of the problem). A
wide range of observations now compellingly suggest that the universe
possesses a non-zero cosmological constant. Some of the recent discussions on
the cosmological constant \textquotedblleft problem\textquotedblright\ and on
cosmology with a time-varying cosmological constant point out that in the
absence of any interaction with matter or radiation, the cosmological constant
remains a \textquotedblleft constant\textquotedblright. However, in the
presence of interactions with matter or radiation, a solution of Einstein
equations and the assumed equation of covariant conservation of stress-energy
with a time-varying $\Lambda$ can be found. This entails that energy has to be
conserved by a decrease in the energy density of the vacuum component followed
by a corresponding increase in the energy density of matter or
radiation.\ Recent observations strongly favour a significant and a positive
value of $\Lambda$ with magnitude $\Lambda(G\hbar/c^{3})\approx10^{-123}$.
These observations suggest an accelerating expansion of the universe, $q<0,$
\cite{perlmutter}.

Following Maia, et al \cite{Mia} who have pointed out that although the
cosmological $\Lambda$-term has a very small value today, it may contribute to
the total energy density of the universe. For this reason, since its present
value, $\Lambda_{0}$, may be a remnant of a primordial inflationary stage, it
seems natural to study cosmological scenarios which include a decaying vacuum
energy density in such a way that it must be high enough at very early times
and sufficiently small at present times in order to be compatible with the
current observations. One of the first attempts at considering a decreasing
cosmological term was formulated by Chen et al \cite{CW}. By studying the
Wheeler-DeWitt equation, they argue through dimensional considerations that
the cosmological $\Lambda$-term must follow a relationship such as
$\Lambda\sim t^{-2},$ in order to fit with current observations. Other
mechanism to describe such variation have been formulated within the framework
of the so-called \textquotedblleft quintessence models\textquotedblright.
Recently this class of models have received a great deal of attention
\cite{quintessence} and \cite{steinhardt}. Taking into account different
observational data it is possible to rule out and to obtain \textquotedblleft
correct\textquotedblright\ potential which could play the role of an effective
cosmological constant. This strengthens the idea of considering alternative
theories where the scalar field is non-minimally coupled to gravity, like
scalar-tensor theories (STT) \cite{Pimen}. This class of theories furthermore
allows the variation of other constants such as the Newton gravitational one.
There are several STT derived from the original one, the Brans-Dicke (BD)
model (see for example \cite{JBD1}-\cite{bertolami}). They have been
formulated as possible solutions to the discrepancies with observations and
try to explain the behaviour of the universe at late times (see \cite{sen}%
-\cite{bartolo}). Of particular interest are the so called chameleon
scalar-tensor theories \cite{Chame}.

The study of self-similar (SS) models is quite important since a large class
of orthogonal spatially homogeneous models are asymptotically self-similar at
the initial singularity and are approximated by exact perfect fluid or vacuum
self-similar power-law models. Exact self-similar power-law models can also
approximate general Bianchi models at intermediate stages of their evolution.
This last point is of particular importance in relating Bianchi models to the
real Universe. At the same time, self-similar solutions can describe the
behaviour of Bianchi models at late times i.e. as $t\rightarrow\infty$ (see
\cite{ColeyDS}).

The aim of this work is to study self-similar solutions of a Bianchi
$\mathrm{VI}_{0}$ cosmological model in different contexts and where the
\textquotedblleft constants\textquotedblright\ $G$ and $\Lambda$ may vary. We
are mainly interested in finding exact solutions for the proposed models as
well as to compare the behavior of $G$ and $\Lambda$ in the different
contexts. In section 2 we start showing all the geometrical ingredients that
we are going to use throughout the paper. We put special emphasis on the study
of the curvature invariants in order to study whether the obtained solutions
isotropize. Once we have calculated the homothetic vector field (HVF) in
section 3 we study the \textquotedblleft classical\textquotedblright\ solution
for a vacuum and perfect fluid models comparing these solutions with those
obtained ones in a previous work (in that work we used another Bianchi
$\mathrm{VI}_{0}$ metric \cite{Tony1}) as well as a perfect fluid model with
time-varying constants. In section 4, we start by studying the kind of
potential and scalar fields compatible with self-similar solution. The stated
theorems are very general and are valid for all the Bianchi models as well as
for the FRW models. All the proofs have been performed by studying the
Klein-Gordon equation through the Lie group method. Once we know the scalar
fields compatible with the self-similar solution we continue studying a simple
scalar model as well as a non-interacting scalar model with matter. In order
to incorporate a gravitational \textquotedblleft
varying-constant\textquotedblright\ $G(t)$ within this framework we purpose,
in a phenomenological way, a modified Klein-Gordon equation. As above we need
to study the class of potential compatible with a self-similar solution and a
varying $G$. Two kinds of models are studied. In section 5 we study a
generalized scalar-tensor model that determine an accelerated expansion at the
present epoch, with arbitrary $\omega(\phi)=const.$ and $\Lambda(\phi)$, where
this last function plays the role of an effective cosmological constant. We
would like to emphasize that in order to study the resulting field equations
(FE) we have not needed to make any assumption, otherwise, we have deduced,
the form of $\Lambda(\phi)$ by studying the conservation equation through the
Lie group method. In section 6, we summarize our results. Finally, in the
appendix A, we study through the matter collineation method the kinds of
potentials compatible with a self-similar solution in the framework of $G$
constant. In appendix B we study, using the same method, the $G-$var framework
in such a way that we regain through this method the results obtained in
section 4.

\section{The geometric ingredients}

We start by considering the following Killing vector fields (KFV) (see
\cite{ES})
\begin{equation}
\xi_{1}=\partial_{x}+mz\partial_{y}+my\partial_{z},\qquad\xi_{2}=\partial
_{y},\qquad\xi_{3}=\partial_{z}, \label{B6_K}%
\end{equation}
then%
\begin{equation}
\left[  \xi_{1},\xi_{2}\right]  =-m\xi_{3},\qquad\left[  \xi_{2},\xi
_{3}\right]  =0,\qquad\left[  \xi_{3},\xi_{1}\right]  =m\xi_{2}.
\end{equation}
Note that in this approach is essential to consider the $m-$parameter,
otherwise it is impossible to obtain self-similar (SS) solutions.

In this way it is obtained the following vector fields $\left\{
X_{j}\right\}  $, such that, $\left[  \xi_{i},X_{j}\right]  =0,$ $\left[
X_{i},X_{j}\right]  =-C_{ij}^{k}X_{k}:$%
\begin{equation}
X_{1}=\cosh mx\partial_{y}+\sinh mx\partial_{z},\;X_{2}=\sinh mx\partial
_{y}+\cosh mx\partial_{y},\;X_{3}=\partial_{x},
\end{equation}
and the dual 1-forms:%
\begin{equation}
\omega^{1}=\mathrm{dx,\qquad}\omega^{2}=\cosh mx\mathrm{dy}-\sinh
mx\mathrm{dz,\qquad}\omega^{3}=-\sinh mx\mathrm{dy}+\cosh mx\mathrm{dz,}
\label{4}%
\end{equation}

The metric is defined by
\begin{equation}
ds^{2}=-c^{2}dt^{2}+a^{2}(t)\left(  \omega^{1}\right)  ^{2}+b^{2}(t)\left(
\omega^{2}\right)  ^{2}+d^{2}(t)\left(  \omega^{3}\right)  ^{2},
\end{equation}
finding that following metric when using Eq. (\ref{4})%
\begin{align}
ds^{2}  &  =-\mathrm{dt}^{2}+a^{2}\mathrm{dx}^{2}+\left(  b^{2}\cosh
^{2}mx+d^{2}\sinh^{2}mx\right)  \mathrm{dy}^{2}\nonumber\\
&  -2\left(  b^{2}+d^{2}\right)  \cosh mx\sinh mx\mathrm{dydz}+\left(
b^{2}\sinh^{2}mx+d^{2}\cosh^{2}mx\right)  \mathrm{dz}^{2}, \label{BVIo-metric}%
\end{align}
where we have set $c=1.$

We may define the four velocity as follows: $u^{i}=\left(  1,0,0,0\right)  ,$
in such a way that it is verified, $g(u^{i},u^{i})=-1.$ From the definition of
the $4-$velocity we find that:
\begin{equation}
H=\frac{1}{3}\left(  \frac{a^{\prime}}{a}+\frac{b^{\prime}}{b}+\frac
{d^{\prime}}{d}\right)  =\frac{1}{3}\sum_{i}H_{i},\quad q=\frac{d}{dt}\left(
\frac{3}{H}\right)  -1,
\end{equation}
and
\begin{equation}
\sigma^{2}=\frac{1}{3}\left(  \sum_{i}H_{i}^{2}-\sum_{i\neq j}H_{i}%
H_{j}\right)  .
\end{equation}

Isotropization means, in essence, that at large physical times $t$, when the
volume factor, $v=abd,$ tends to infinity, the three scale factors $\left(
a,b,d\right)  $ grow at the same rate \cite{KB}. We will therefore say, by
definition, that a model is isotropizing if, for each scale factors,
$a/f\rightarrow const>0,$ $b/f\rightarrow const>0$ and $d/f\rightarrow
const>0,$ as $v\rightarrow\infty,$ where, $f=v^{1/3},$ is the mean scale
factor. Then, by rescaling some of the coordinates, we can make
$a/f\rightarrow1$, $b/f\rightarrow1,$ $d/f\rightarrow1$ and the metric will
become manifestly isotropic at large $t$. Two such criteria are $\mathcal{A}%
\rightarrow0$ and $\sigma\rightarrow0$, where the mean anisotropy parameter
$\mathcal{A}$ is defined for the metric as (see, e.g., \cite{H})%

\begin{equation}
\mathcal{A=}\frac{\sigma^{2}}{3H^{2}}=\frac{1}{3}\sum\frac{H_{i}^{2}}{H^{2}%
}-1. \label{MA}%
\end{equation}

The mean anisotropy parameter gives a dimensionless measure of the anisotropy
in the Hubble flow by comparing the shear scalar $\sigma$ to the overall rate
of expansion as described by $H$. The anisotropy in the temperature of the
CMBR enables one to estimate the value of $\sigma^{2}$ at the present epoch.

We also study the curvature behaviour of the solutions (see for example
\cite{Caminati, gron1, gron2} and \cite{Barrow}). The studied curvature
quantities are the following ones: Ricci scalar, $I_{0}=R_{i}^{i},$
Krestchmann scalar, $I_{1}=R_{ijkl}R^{ijkl}$, the full contraction of the
Ricci tensor, $I_{2}=R_{ij}R^{ij}.$ The Weyl scalar, $I_{3}=C^{abcd}%
C_{abcd}=I_{1}-2I_{2}+\frac{1}{3}I_{0}^{2},$ as well as the electric scalar
$I_{4}=E_{ij}E^{ij},$ \cite{Coley} and the magnetic scalar $I_{5}=H_{ij}%
H^{ij},$ of the Weyl tensor. The Weyl parameter $\mathcal{W}$ \cite{Coley},
which is a dimensionless measure of the Weyl curvature tensor,
\begin{equation}
\mathcal{W}^{2}=\frac{W^{2}}{H^{4}}=\frac{1}{6H^{4}}\left(  E_{ij}%
E^{ij}+H_{ij}H^{ij}\right)  =\frac{I_{3}}{24H^{4}}, \label{Wp}%
\end{equation}
can be regarded as describing the intrinsic anisotropy in the gravitational
field \cite{W}. Cosmological observations can, in principle, give an upper
bound on $\mathcal{W}$, although obtaining a strong bound is beyond the reach
of present-day observations.

Finally, we shall calculate the gravitational entropy. From a thermodynamic
point of view there is every indication that the entropy of the universe is
\textit{increasing. }Increasing gravitational entropy would naturally be
reflected by increasing local anisotropy, and the Weyl tensor reflects this.
One suggestion in this connection was Penrose's formulation of what is called
the \textit{Weyl curvature conjecture }(WCC) \cite{penrose}. The hypothesis is
motivated by the need for a low entropy constraint on the initial state of the
universe when the matter content was in thermal equilibrium. Penrose has
argued that the low entropy constraint follows from the existence of the
second law of thermodynamics, and that the low entropy in the gravitational
field is tied to constraints on the Weyl curvature. Wainwright and Anderson
\cite{wa} express this conjecture in terms of the ratio of the Weyl and the
Ricci curvature invariants,
\begin{equation}
P^{2}=\frac{I_{3}}{I_{2}}. \label{entropy}%
\end{equation}

The physical content of the conjecture is that the initial state of the
universe is homogeneous and isotropic. As pointed out by Rothman and Anninos
\cite{ra,rothman} (see also \cite{Wain}) the entities $P^{2}$ and $I_{3}$ are
\textquotedblleft\emph{local\textquotedblright} entities in contrast to what
we usually think of entropy. Gr\o n and Hervik (\cite{gron1,gron2}) have
introduced a non-local quantity which shows a more promising behaviour
concerning the WCC. This quantity is also constructed in terms of the Weyl
tensor, and it has therefore a direct connection with the Weyl curvature
tensor but in a \textquotedblleft\emph{non-local form\textquotedblright}.

For SS spacetimes, Pelavas and Lake (\cite{lake}) have pointed out the idea
that Eq. (\ref{entropy}) is not an acceptable candidate for gravitational
entropy along the homothetic trajectories of any self-similar spacetime. Nor
indeed is any \textquotedblleft dimensionless" scalar. It is showed that the
Lie derivative of any "dimensionless" scalar along a homothetic vector field
(HVF) is zero, and concluded that such functions are not acceptable candidates
for the gravitational entropy. Nevertheless \cite{PC}, since self-similar
spacetimes represent asymptotic equilibrium states (since they describe the
asymptotic properties of more general models), and the result $P^{2}=const.,$
is perhaps consistent with this interpretation since the entropy does not
change in these equilibrium models, and perhaps consequently supports the idea
that $P^{2}$ represents a \textquotedblleft gravitational entropy". As we
shall show $\mathcal{W}^{2}$ and $P^{2}$ will be constant along homothetic
trajectories, since all the dimensionless quantities remain constant along
timelike homothetic trajectories.

\subsection{The homothetic vector field}

The homothetic vector field (HVF) is calculated from equation
\begin{equation}
\mathcal{L}_{HO}g_{ij}=2g_{ij}, \label{SS_Eq}%
\end{equation}
(see for example \cite{CC}-\cite{HW} and \cite{Griego}). Algebra brings us to
obtain the following HVF:%
\begin{equation}
HO=\left(  t+t_{0}\right)  \partial_{t}+\left(  1-\left(  t+t_{0}\right)
\frac{a^{\prime}}{a}\right)  x\partial_{x}+\left(  1-\left(  t+t_{0}\right)
\frac{b^{\prime}}{b}\right)  y\partial_{y}+\left(  1-\left(  t+t_{0}\right)
\frac{d^{\prime}}{d}\right)  z\partial_{z}, \label{HO}%
\end{equation}
with the following constrains for the scale factors:
\begin{equation}
a(t)=a_{0}\left(  t+t_{0}\right)  ^{a_{1}},\quad b(t)=b_{0}\left(
t+t_{0}\right)  ^{a_{2}},\quad d(t)=d_{0}\left(  t+t_{0}\right)  ^{a_{3}},
\end{equation}
where $a_{1},a_{2},a_{3}\in\mathbb{R},$ and the following restrictions for the
constants $a_{1},a_{2},a_{3}$ (obtained from the Eq.(\ref{SS_Eq}))%
\begin{equation}
a_{1}=1,\qquad a_{2}=a_{3}. \label{restrictions}%
\end{equation}

As is observed we have been able to obtain a non-singular solution for the
scale factors. Therefore the resulting homothetic vector field is: $HO=\left(
t+t_{0}\right)  \partial_{t}+\left(  1-a_{2}\right)  y\partial_{y}+\left(
1-a_{2}\right)  z\partial_{z}.$

Since we already know how the scale factors behave, then we may calculate all
the curvature invariants as well as all the kinematical quantities.
\begin{equation}
H=\frac{1+2a_{2}}{3\left(  t+t_{0}\right)  },\quad q=2\frac{1-a_{2}}{2a_{2}%
+1},\quad\sigma^{2}=\frac{\sqrt{6}}{3}\frac{\left(  a_{2}-1\right)  ^{2}%
}{\left(  t+t_{0}\right)  ^{2}},
\end{equation}
finding that%
\begin{equation}
\mathcal{A}=\frac{\left(  a_{2}-1\right)  ^{2}}{\left(  1+2a_{2}\right)  ^{2}%
}=const.,
\end{equation}
where, as we can see, $\ \mathcal{A}\in\left(  0,1\right)  ,\forall a_{2}%
\in\left(  0,1\right)  .$ Although the quantity $\mathcal{A}$ is constant,
this quantity may take very small values, in fact $\mathcal{A}$ may runs to
zero. It is also observed that the model never inflates since $q\in\left(
0,2\right)  ,$ $\forall a_{2}\in\left(  0,1\right)  .$

Concerning the curvature invariants we find that%
\begin{align}
I_{0}  &  =\left(  6a_{2}^{2}-2m^{2}\right)  \left(  t+t_{0}\right)
^{-2},\nonumber\\
I_{1}  &  =4\left(  3\left(  a_{2}^{4}+m^{4}\right)  -4a_{2}^{3}+2a_{2}%
^{2}\left(  1-m^{2}\right)  +4m^{2}\left(  a_{2}-1\right)  \right)  \left(
t+t_{0}\right)  ^{-4},\nonumber\\
I_{2}  &  =4\left(  2a_{2}^{2}-2a_{2}^{3}+3a_{2}^{4}-2a_{2}m^{2}+m^{4}\right)
\left(  t+t_{0}\right)  ^{-4},\nonumber\\
I_{3}  &  =16m^{2}\left(  m^{2}+6a_{2}-3\left(  1+a_{2}^{2}\right)  \right)
\left(  t+t_{0}\right)  ^{-4}/3,\nonumber\\
I_{4}  &  =2m^{4}\left(  t+t_{0}\right)  ^{-4}/3,\nonumber\\
I_{5}  &  =2m^{2}\left(  a_{2}-1\right)  ^{2}\left(  t+t_{0}\right)  ^{-4},
\end{align}
and
\begin{align}
\mathcal{W}^{2}  &  =\frac{m^{2}\left(  3\left(  1+a_{2}^{2}\right)
+m^{2}-6a_{2}\right)  }{9\left(  1+2a_{2}\right)  ^{4}}=const,\\
P^{2}  &  =\frac{4m^{2}\left(  m^{2}+6a_{2}-3\left(  1+a_{2}^{2}\right)
\right)  }{3\left(  2a_{2}^{2}-2a_{2}^{3}+3a_{2}^{4}+\left(  m^{2}%
-2a_{2}\right)  m^{2}\right)  }=const.
\end{align}
As above, although $\mathcal{W}^{2}\mathcal{=}const\ll1,$ dimensionless
quantity, it may take very small values, for example, if $a_{2}\rightarrow1$
and $m\rightarrow0$ ($\mathcal{W}^{2}\rightarrow m^{4}/3^{6})$.

\section{The classical model}

We shall take into account the Einstein's field equations (FE) written in the
following form:%
\begin{equation}
R_{ij}-\frac{1}{2}Rg_{ij}=8\pi GT_{ij}^{m}-\Lambda g_{ij}, \label{EFE}%
\end{equation}
where, $\Lambda$ is the cosmological constant and $T_{ij}^{m},$ is the
energy-momentum tensor defined by%
\begin{equation}
T_{ij}^{m}=\left(  p_{m}+\rho_{m}\right)  u_{i}u_{j}+p_{m}g_{ij},
\label{PF-tensor}%
\end{equation}
and where the $4-$velocity is defined by: $u_{i}=\left(  1,0,0,0\right)  ,$
$\rho_{m}$ is the energy density and $p_{m}$ is the pressure. They are related
by the equation: $p_{m}=\omega\rho_{m},$ with $\omega\in(-1,1].$ In this
section we study three models; vacuum solutions, a perfect fluid model and a
model with a perfect fluid and where the constants $G$ and $\Lambda$ are time
varying function.

\subsection{Vacuum solution}

In this case we have found only one solution $a_{2}=m=0.$ Therefore, the
metric Eq. (\ref{BVIo-metric}) collapses to this one:
\begin{equation}
ds^{2}=-\mathrm{dt}^{2}+\left(  t+t_{0}\right)  ^{2}\mathrm{dx}^{2}%
+\mathrm{dy}^{2}+\mathrm{dz}^{2}.
\end{equation}

We may compare this solution with the one obtained in the paper
\cite{Tony1} where we were able to obtain a new solution belonging
to Bianchi $\mathrm{VI}$ type. In this case, this solution does
not belong to Bianchi $\mathrm{VI}_{0}$ type, so we may say that
the metric Eq. (\ref{BVIo-metric}) is more restrictive than the
employed one in \cite{Tony1}. This solution is known as the Taub
form of flat space-time (\cite{WE} chap. 9).

\subsection{Perfect Fluid}

For this model we obtain the following results:%
\begin{equation}
a_{2}=\frac{1-\omega}{2\left(  \omega+1\right)  }\in\left(  0,1\right)  ,\quad
m=\frac{1}{2}\sqrt{\frac{-\left(  3\omega+1\right)  \left(  \omega-1\right)
}{\left(  \omega+1\right)  ^{2}}}\in\left(  0,\frac{1}{2}\right]  ,
\end{equation}
where $\omega\in\left(  -\frac{1}{3},1\right)  ,$ while the scale factors and
the energy density behave as
\begin{equation}
a(t)=a_{0}\left(  t+t_{0}\right)  ,\;b(t)=b_{0}\left(  t+t_{0}\right)
^{a_{2}},\;d=d_{0}\left(  t+t_{0}\right)  ^{a_{2}},\;\rho=\rho_{0}\left(
t+t_{0}\right)  ^{-\gamma},
\end{equation}
where $\gamma=\left(  1+\omega\right)  \left(  1+2a_{2}\right)  ,$ $\rho
_{0}=\frac{A}{8\pi G},$ $A=2a_{2}+a_{2}^{2}-m^{2}.$ With regard to the
deceleration parameter%
\begin{equation}
q=\frac{1}{2}\left(  1+3\omega\right)  >0\qquad\mathcal{A}=\frac{\left(
3\omega+1\right)  ^{2}}{16}=const.\in\left(  0,1\right)  ,
\end{equation}
$\forall\omega\in\left(  -\frac{1}{3},1\right)  ,$ so this model does not
inflate, and $\mathcal{A\rightarrow}0$ only when $\omega\rightarrow-\frac
{1}{3},$ while the Weyl parameter and the gravitational entropy behaves as
\begin{align}
\mathcal{W}^{2}  &  =-\frac{\left(  3\omega+1\right)  ^{2}\left(
\omega-1\right)  \left(  2\omega+1\right)  }{576}=const\in\left(
0,0.012\right)  ,\\
P^{2}  &  =\frac{2}{3}\frac{\left(  3\omega+1\right)  ^{2}\left(
5\omega+1\right)  }{\left(  \omega-1\right)  \left(  3\omega^{2}+1\right)
}=const,\in(-\infty,a],{\quad}a\rightarrow0^{+},
\end{align}
where as it is observed, $\mathcal{W}^{2}\leq0.01,$ it takes a
very small values, $\mathcal{W}^{2}\ll1$, and it runs to zero if
$\omega\rightarrow1$ and $\omega\rightarrow-1/3.$ Notice that our
solution is only valid if $\omega \in\left(  -\frac{1}{3},1\right)
.$ Nevertheless $P^{2}$ has a very pathological behaviour. For
example, $P^{2}\rightarrow a=2.5000\times10^{-11}$ as
$\omega\rightarrow-1/3$, $P^{2}=0$ when $\omega=-\frac{1}{3},$ and
$\omega=-\frac{1}{5}$ but $P^{2}\rightarrow-\infty$ when
$\omega\rightarrow1.$

Therefore, the metric collapses to this one:
\begin{equation}
ds^{2}=-\mathrm{dt}^{2}+\left(  t+t_{0}\right)  ^{2}\mathrm{dx}^{2}+\left(
t+t_{0}\right)  ^{2a_{2}}\left(  \cosh2mx\mathrm{dy}^{2}-2\sinh
2mx\mathrm{dydz}+\cosh2mx\mathrm{dz}^{2}\right)  . \label{SSmetric}%
\end{equation}
Note that in \cite{Tony1} we obtain two solutions, while with the metric Eq.
(\ref{BVIo-metric}) we are only able to obtain one solution which coincides
with the one obtained in \cite{Tony1}. Therefore, this solution is valid when
$\omega\in\left(  -\frac{1}{3},1\right)  $ and $m\in\left(  0,\frac{1}%
{2}\right]  .$ It does not accelerate since $q>0.$ Nevertheless, we may say
that the solution isotropizes since $\mathcal{A\rightarrow}0$ when
$\omega\rightarrow-\frac{1}{3}$ (then $a_{2}\rightarrow1=a_{1}$ and
$m\rightarrow0$) and $\mathcal{W}^{2}\ll1.$ The behaviour of $P^{2}$ shows us,
that maybe, it is not a good definition for the gravitational entropy (at
least in the framework of self-similar solutions) as we have already discussed.

For $\omega=-\frac{1}{3},$ and $\omega=1$ we get that $m=0,$ so the solution
does not belong to Bianchi $\mathrm{VI}_{0}$ type, furthermore if $\omega
=1$,then $a_{2}=0$ i.e. we obtain the vacuum solution.

\subsection{Time varying constants model}

In this framework the FE are the following ones:%
\begin{align}
\frac{a^{\prime}}{a}\frac{b^{\prime}}{b}+\frac{a^{\prime}}{a}\frac{d^{\prime}%
}{d}+\frac{d^{\prime}}{d}\frac{b^{\prime}}{b}-\left(  2+\frac{b^{2}}{d^{2}%
}+\frac{d^{2}}{b^{2}}\right)  \frac{m^{2}}{4a^{2}}  &  =8\pi G\rho_{m}+\Lambda
c^{2},\label{B6NFE1}\\
\frac{b^{\prime\prime}}{b}+\frac{d^{\prime\prime}}{d}+\frac{d^{\prime}}%
{d}\frac{b^{\prime}}{b}+\left(  2+\frac{b^{2}}{d^{2}}+\frac{d^{2}}{b^{2}%
}\right)  \frac{m^{2}}{4a^{2}}  &  =-8\pi G\omega\rho_{m}+\Lambda
c^{2},\label{B6NFE2}\\
\frac{a^{\prime\prime}}{a}+\frac{b^{\prime\prime}}{b}+\frac{a^{\prime}}%
{a}\frac{b^{\prime}}{b}-\left(  2+\frac{3d^{2}}{b^{2}}-\frac{b^{2}}{d^{2}%
}\right)  \frac{m^{2}}{4a^{2}}  &  =-8\pi G\omega\rho_{m}+\Lambda
c^{2},\label{B6NFE3}\\
\frac{b^{\prime\prime}}{b}-\frac{d^{\prime\prime}}{d}+\frac{a^{\prime}}%
{a}\frac{b^{\prime}}{b}-\frac{a^{\prime}}{a}\frac{d^{\prime}}{d}+m^{2}\left(
\frac{b^{2}}{d^{2}a^{2}}-\frac{d^{2}}{b^{2}a^{2}}\right)   &
=0,\label{B6NFE4}\\
\frac{d^{\prime\prime}}{d}+\frac{a^{\prime\prime}}{a}+\frac{a^{\prime}}%
{a}\frac{d^{\prime}}{d}-\left(  2+\frac{3b^{2}}{d^{2}}-\frac{d^{2}}{b^{2}%
}\right)  \frac{m^{2}}{4a^{2}}  &  =-8\pi G\omega\rho_{m}+\Lambda
c^{2},\label{B6NFE5}\\
\rho^{\prime}+\rho\left(  1+\omega\right)  \left(  \frac{a^{\prime}}{a}%
+\frac{b^{\prime}}{b}+\frac{d^{\prime}}{d}\right)   &  =0,\label{B6NFE6}\\
\Lambda^{\prime}  &  =-8\pi G^{\prime}\rho_{m}, \label{B6NFE7}%
\end{align}
where we have taken into account the condition $divT=0.$

Now, we shall take into account the obtained SS restrictions for the scale
factors given by Eq. (\ref{restrictions}). From Eq. (\ref{B6NFE6}) we get%
\begin{equation}
\rho=\rho_{0}\left(  t+t_{0}\right)  ^{-\gamma}, \label{t_den}%
\end{equation}
where $\gamma=\left(  \omega+1\right)  h$ and $h=1+2a_{2}$, since $a_{1}=1$
and $a_{2}=a_{3}.$

From Eq. (\ref{B6NFE1}) we obtain:%
\begin{equation}
\Lambda=\left[  A\left(  t+t_{0}\right)  ^{-2}-8\pi G\rho_{0}\left(
t+t_{0}\right)  ^{-\left(  \omega+1\right)  h}\right]  , \label{peta1}%
\end{equation}
where $A=2a_{2}+a_{2}^{2}-m^{2}.$ Now, taking into account Eq. (\ref{B6NFE7})
and Eq. (\ref{peta1}), algebra brings us to obtain%
\begin{equation}
G=G_{0}\left(  t+t_{0}\right)  ^{\gamma-2},\qquad G_{0}=\frac{A}{4\pi\rho
_{0}\left(  \omega+1\right)  }, \label{G}%
\end{equation}
while the cosmological \textquotedblleft constant\textquotedblright\ behaves
as:
\begin{equation}
\Lambda=\frac{A}{c^{2}}\left(  1-\frac{2}{\gamma}\right)  \left(
t+t_{0}\right)  ^{-2}=\Lambda_{0}\left(  t+t_{0}\right)  ^{-2}.
\end{equation}

In this case we have found the following solution%
\begin{equation}
a_{2\pm}=\frac{1}{2}\left(  1\pm\sqrt{1-4m^{2}}\right)  ,\qquad\forall
m\in\left[  -\frac{1}{2},\frac{1}{2}\right]  \backslash\left\{  0\right\}  ,
\end{equation}
$a_{2+}\in\lbrack1/2,1),$ $a_{2-}\in\lbrack0,1/2),$ hence $h_{\pm}=1+2a_{2\pm
}$ and therefore we obtain $h_{+}\in\lbrack2,3)$ and $h_{-}\in\lbrack1,2)$%
\begin{align}
a(t)  &  =a_{0}\left(  t+t_{0}\right)  ,\quad b(t)=b_{0}\left(  t+t_{0}%
\right)  ^{a_{2\pm}},\quad d=d_{0}\left(  t+t_{0}\right)  ^{a_{2\pm}%
},\nonumber\\
\rho &  =\rho_{0}\left(  t+t_{0}\right)  ^{-\gamma_{\pm}},\;G=G_{0\pm}\left(
t+t_{0}\right)  ^{\gamma_{\pm}-2},\;\Lambda=\Lambda_{0\pm}\left(
t+t_{0}\right)  ^{-2},
\end{align}
with $\gamma_{\pm}=\left(  \omega+1\right)  h_{\pm}.$ Notice that this
solution is valid $\forall\omega\in(-1,1].$ In this way the metric collapses
to Eq. (\ref{SSmetric}). The behaviour of the \textquotedblleft
constants\textquotedblright\ is the following one:%
\begin{equation}
G\thickapprox\left\{
\begin{array}
[c]{lll}%
decreasing & if & \left(  \omega+1\right)  h_{\pm}<2\\
constant & if & \left(  \omega+1\right)  h_{\pm}=2\\
growing & if & \left(  \omega+1\right)  h_{\pm}>2
\end{array}
\right.  ,\qquad\Lambda_{0\pm}\thickapprox\left\{
\begin{array}
[c]{ccc}%
<0 & if & \left(  \omega+1\right)  h_{\pm}<2\\
=0 & if & \left(  \omega+1\right)  h_{\pm}=2\\
>0 & if & \left(  \omega+1\right)  h_{\pm}>2
\end{array}
\right.  ,
\end{equation}
where $h_{\pm}=1+2a_{2\pm}=2\pm\sqrt{1-4m^{2}}.$ Note that in \cite{Tony1} we
obtained another solution. With regard to the deceleration parameter (for
simplicity we have performed all these calculations with $h_{+}$ i.e. with
$a_{2+}$)
\begin{align}
q_{+}  &  =\frac{3}{h_{+}}-1>0,\qquad q\in\left(  0,\frac{1}{2}\right)
,\quad\forall m\in\left[  -\frac{1}{2},\frac{1}{2}\right]  \backslash\left\{
0\right\}  ,\\
\mathcal{A}  &  =\frac{1}{4}\frac{\left(  \sqrt{1-4m^{2}}-1\right)  ^{2}%
}{\left(  \sqrt{1-4m^{2}}+2\right)  ^{2}}=const\in\left(  0,0.06\right)
\rightarrow0,
\end{align}%
\begin{align}
\mathcal{W}^{2}  &  =-\frac{m^{2}\left(  4m^{2}-3+3\sqrt{1-4m^{2}}\right)
}{18\left(  2+\sqrt{1-4m^{2}}\right)  }=const\in\left(  0,0.0016\right)  ,\\
P^{2}  &  =-\frac{4m^{2}\left(  8m^{2}-3+3\sqrt{1-4m^{2}}\right)  }{3\left(
\left(  6m^{2}-3\right)  \sqrt{1-4m^{2}}+12m^{2}-8m^{4}-3\right)  }=const,
\end{align}
where $P^{2}\in\left(  -\infty,0.01\right)  ,$ $\forall m\in\left[  -\frac
{1}{2},\frac{1}{2}\right]  \backslash\left\{  0\right\}  .$ Therefore this
solution is valid $\forall\omega\in(-1,1]$ and $\forall m\in\left(  -\frac
{1}{2},\frac{1}{2}\right)  \backslash\left\{  0\right\}  .$ The model does not
accelerate but isotropizes since $\mathcal{W}^{2}\rightarrow0$ as well as
$\mathcal{A\ll}1.$ With regard to the quantity $P^{2}$ it is observed that
$P^{2}\rightarrow0$, $\forall m\in\left(  -\frac{1}{2},\frac{1}{2}\right)
\backslash\left\{  0\right\}  $ and only runs to minus infinity when
$m\rightarrow\pm\frac{1}{2}.$

\section{Scalar field model}

In this section we are going to study several scalar models. In the first
place we study which kinds of potentials are compatible with the self-similar
solution. For this purpose we study through the Lie group method the resulting
Klein-Gordon equation. Once we have deduced the potential compatible with the
self-similar solution we study if this kind of potential brings us to obtain
self-similar solution. We answer in this case is no, we only obtain power law
solutions but this fact does not mean that they are self-similar solution.
Therefore, after this brief analysis on the potential, we start by studying a
simple scalar model. In second place we study a non-interacting scalar and
matter model. We leave for a forthcoming paper the study of the very
interesting case of interacting scalar and matter models (see for example
\cite{Wetterich} and \cite{Coleys}). In the third class of studied models we
introduce the hypothesis of a $G-$var, i.e. we study a scalar model where
$G=G(t),$ is a function of time. In this case, in a phenomenological way, we
outline a modified Klein-Gordon equation in order to take into account the
possible variation on time of the function $G(t).$ We go next to study the
kind of potential compatible with a self-similar solution and $G(t)$. To end
this section, we study a model with scalar and matter fields and $G-$varying.

The stress-energy tensor may be written in the following form:%
\begin{equation}
T_{ij}^{\phi}=\left(  p^{\phi}+\rho^{\phi}\right)  u_{i}u_{j}+p^{\phi}g_{ij},
\label{STensor}%
\end{equation}
where the energy density and the pressure of the fluid due a scalar field are
given by%
\begin{equation}
\rho^{\phi}=\frac{1}{2}\phi^{\prime2}+V(\phi),\qquad p^{\phi}=\frac{1}{2}%
\phi^{\prime2}-V(\phi),
\end{equation}
while the conservation equation reads (Klein-Gordon equation)%
\begin{equation}
\phi^{\prime\prime}+\phi^{\prime}H+\frac{d}{d\phi}V=0. \label{KG1}%
\end{equation}

We need to study the class of potential compatible with the SS solution, for
this reason we study by using the Lie group method the KG equation, (for an
introduction to the Lie group method see for example \cite{Ibra}-\cite{bluman}
and \cite{Tony2} for a concrete application in cosmological contexts). In
particular, we seek the forms of $V(\phi)$ for which our field equations admit
symmetries and therefore they are integrable. In this case we already know
that the Hubble function behaves as: $H=h\left(  t+t_{0}\right)  ^{-1},$
$h\in\mathbb{R}^{+}$, so the KG equation reads%
\begin{equation}
\phi^{\prime\prime}+h\phi^{\prime}\left(  t+t_{0}\right)  ^{-1}+\frac
{dV}{d\phi}=0.
\end{equation}

\begin{theorem}
The only possible form for the potential $V\left(  \phi\right)  $ for a
spacetime admitting a HFV, $HO,$ is $V(\phi)=V_{0}\exp\left(  \kappa
\phi\right)  $ and therefore $\phi=\alpha\ln t.$
\end{theorem}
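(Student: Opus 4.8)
The plan is to treat the Klein--Gordon equation
\begin{equation}
\phi^{\prime\prime}+h\,\phi^{\prime}\left(t+t_{0}\right)^{-1}+\frac{dV}{d\phi}=0
\end{equation}
as a second-order ODE in $\phi(t)$ and to demand that it admit a one-parameter Lie point symmetry, i.e.\ a generator $\mathbf{X}=\xi(t,\phi)\partial_{t}+\eta(t,\phi)\partial_{\phi}$ whose second prolongation annihilates the equation on its solution manifold. First I would write down the linearized symmetry condition (the determining equation), expand it in powers of $\phi^{\prime}$, and split it into the overdetermined system for $\xi$ and $\eta$. Since the $t$-dependence enters only through the single coefficient $h\left(t+t_{0}\right)^{-1}$, it is natural to shift $\tau=t+t_{0}$ and look for a scaling-type generator; the structure strongly suggests $\xi\propto\tau$ and $\eta$ independent of $t$ (a dilation in $t$ combined with a translation/dilation in $\phi$), which is exactly the residue of the homothety $HO$ acting on the reduced variables.

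The key step is the last of the determining equations, which couples $V$ to the symmetry: consistency forces a functional equation of the form $\eta\,V^{\prime\prime}+(\text{something})V^{\prime}=(\text{const})V^{\prime}$, and requiring a nontrivial symmetry to exist then pins down $dV/d\phi$ to be proportional to $V$ itself, whence $V(\phi)=V_{0}\exp(\kappa\phi)$ for constants $V_{0},\kappa$. Once the exponential potential is in hand, I would substitute the power-law ansatz dictated by self-similarity --- namely $\phi=\alpha\ln(t+t_{0})+\text{const}$, so that $\phi^{\prime}=\alpha/(t+t_{0})$ and $\phi^{\prime\prime}=-\alpha/(t+t_{0})^{2}$ --- into the Klein--Gordon equation. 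The exponential term becomes $V_{0}\kappa e^{\kappa\phi}=V_{0}\kappa\,(t+t_{0})^{\kappa\alpha}\cdot(\text{const})$, and matching this $(t+t_{0})$-power against the $(t+t_{0})^{-2}$ behaviour of the kinetic terms requires $\kappa\alpha=-2$, while matching coefficients fixes $\alpha$ (and hence $\kappa$) in terms of $h$ and $V_{0}$. This shows the exponential potential is not merely necessary but actually realizes the self-similar/power-law solution $\phi\sim\ln t$, consistent with the remark in the text that the Lie analysis yields power-law solutions.

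I would also record why no other branch survives: a constant potential ($V^{\prime}=0$) decouples the equation and gives the trivial $\Lambda$-like case rather than a genuine scalar self-similar solution, and any $V$ with $V^{\prime}/V$ non-constant breaks the scaling symmetry inherited from $HO$, contradicting the hypothesis that the spacetime admits the homothetic vector field. The main obstacle I anticipate is bookkeeping rather than conceptual: correctly carrying out the split of the determining equation in powers of $\phi^{\prime}$ and solving the resulting overdetermined PDE system for $\xi,\eta$ without dropping solution branches --- in particular making sure that the apparently special value of $h$ does not secretly enlarge the symmetry algebra in a way that would permit a different $V$. Handling that carefully is what turns ``an exponential works'' into ``only the exponential works.''
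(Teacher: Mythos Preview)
Your proposal is correct and follows essentially the same route as the paper: apply the Lie point symmetry machinery to the Klein--Gordon equation with $H=h(t+t_0)^{-1}$, write out the determining PDE system, impose the homothety-inherited generator $\xi=\alpha(t+t_0)$, $\eta=\delta$, and read off from the last determining equation the constraint $\delta V_{\phi\phi}+2\alpha V_{\phi}=0$, yielding $V=\beta\exp(-2\alpha\phi/\delta)$ and the invariant solution $\phi\propto\ln(t+t_0)$. The one place where you are slightly more scrupulous than the paper is your final paragraph on excluding other branches; the paper simply remarks that additional symmetries (e.g.\ $\xi=\alpha t$, $\eta=\delta\phi$, leading to $V\sim\phi^{-n}$) are ``not compatible with the FE'' without carrying out that exclusion in detail, so your instinct to check that no special value of $h$ enlarges the admissible potentials is well placed but goes a bit beyond what the paper actually does.
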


\begin{proof}
The application of the Lie group method brings us to outline the following
system of PDEs%
\begin{align}
\xi_{\phi\phi}t^{2} =0,\label{iga1}\\
2ht\xi_{\phi}+t^{2}\eta_{\phi\phi}-2t^{2}\xi_{t\phi} =0,\label{iga2}\\
3t^{2}\xi_{\phi}\frac{dV}{d\phi}+ht\xi_{t}-h\xi+2t^{2}\eta_{t\phi}-t^{2}%
\xi_{tt}=0,\label{iga3}\\
t^{2}\eta\frac{d^{2}V}{d\phi^{2}}+t^{2}\eta_{tt}+2t^{2}\xi_{t}\frac{dV}{d\phi
}-t^{2}\eta_{\phi}\frac{dV}{d\phi}+3at\eta_{t}=0, \label{iga4}%
\end{align}
If we impose the symmetry $\xi=\alpha\left(  t+t_{0}\right)  ,$ $\eta=\delta,$
then its invariant solution is: $\phi=\frac{\delta}{\alpha}\ln\frac{1}{\alpha
}\left(  t+t_{0}\right)  ,$ then, from Eq. (\ref{iga4}), we obtain the next
restriction for the potential $V$%
\begin{equation}
\delta\frac{d^{2}V}{d\phi^{2}}+2\alpha\frac{dV}{d\phi}=0\;\Longrightarrow\;
V=\beta\exp\left(  -2\frac{\alpha}{\delta}\phi\right)  +\kappa,\;\alpha
,\beta,\delta,\kappa\in\mathbb{R}.
\end{equation}

Therefore we have found, redefining the numerical constants, that the only
solution compatible with the FE is%
\begin{equation}
\phi=\pm\sqrt{\alpha}\ln\left(  t+t_{0}\right)  ,\qquad V=\beta\exp\left(
\mp\frac{2}{\sqrt{\alpha}}\phi\right)  . \label{potential1}%
\end{equation}
as it is required.
\end{proof}

Note that in this case it is possible to find more symmetries, but the
solution generated by them are not compatible with the FE. For example, if we
impose the symmetry, $\xi=\alpha t,$ $\eta=\delta\phi$, then Eq. (\ref{iga4})
yields%
\begin{equation}
\delta\phi\frac{d^{2}V}{d\phi^{2}}+\left(  2\alpha-\delta\right)  \frac
{dV}{d\phi}=0\quad\Longrightarrow\quad V=\kappa_{1}\phi^{-\frac{2}{\delta
}\left(  \alpha-\delta\right)  }+\kappa_{2}, \label{potential2}%
\end{equation}
which is the potential proposed by Peebles and Ratra, $V\thickapprox
\phi^{-\alpha}$ \cite{quintessence}, but this solution it is not compatible
with the FE with a SS solution. Nevertheless we shall use this potential in
the $G-$varying scenario. In the appendix we give an alternative derivation of
all these results by using the matter collineation approach following a
previous paper (see \cite{Tony2}).

Models with a self-interaction potential with an exponential dependence on the
scalar field of the form $V=\beta\exp\left(  \mp2\phi\right)  ,$ have been the
subject of much interest and arise naturally from theories of gravity such as
scalar-tensor theories or string theories \cite{Green88}. Recently, it has
been argued that a scalar field with an exponential potential is a strong
candidate for dark matter in spiral galaxies \cite{Guzman} and is consistent
with observations of current accelerated expansion of the universe
\cite{Huterer}.

In the inverse way we may state the following theorem.

\begin{theorem}
For a scalar model if the potential is of the form $V=\beta\exp\left(
\mp2\phi\right)  ,$ then the scale factors must follow a power law solution
i.e. $H=ht^{-1}.$
\end{theorem}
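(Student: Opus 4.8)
The strategy is to run the Lie-group argument of the preceding theorem in reverse: instead of postulating a homothetic vector field and deducing the potential, I fix $V(\phi)=\beta\exp(\mp2\phi)$ from the outset and determine which expansion histories $H(t)$ are then consistent with the scalar field equations. The algebraic feature I would exploit is that this potential satisfies $d^{2}V/d\phi^{2}=\mp2\,dV/d\phi$, i.e. its first and second $\phi$-derivatives are proportional — precisely the property that singled out the exponential in the direct theorem.

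First I would insert $V=\beta\exp(\mp2\phi)$ into the Klein-Gordon equation $\phi''+H\phi'+dV/d\phi=0$, now regarding $H=H(t)$ as the unknown, and write down the determining system for its Lie point symmetries $\xi(t,\phi)\partial_{t}+\eta(t,\phi)\partial_{\phi}$ — the analogue of Eqs. (\ref{iga1})-(\ref{iga4}) with $H(t)$ in place of $h(t+t_{0})^{-1}$ and with $d^{2}V/d\phi^{2}=\mp2\,dV/d\phi$ substituted throughout. Using the proportionality, the restriction on $V$ coming from the analogue of Eq. (\ref{iga4}) collapses to a linear relation among the symmetry parameters, so the admissible generator is forced into the scaling form $\xi=\alpha(t+t_{0})$, $\eta=\delta=\mathrm{const}$, whose invariant is a logarithm, $\phi=(\delta/\alpha)\ln[(t+t_{0})/\alpha]\equiv\pm\sqrt{\alpha}\ln(t+t_{0})$ after the usual rescaling of constants. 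Substituting this $\phi(t)$ back, the three terms of the Klein-Gordon equation become $\mp\sqrt{\alpha}(t+t_{0})^{-2}$, $\pm\sqrt{\alpha}\,H(t+t_{0})^{-1}$ and $\mp2\beta(t+t_{0})^{-2\sqrt{\alpha}}$; balancing the powers of $(t+t_{0})$ simultaneously pins down $\sqrt{\alpha}=1$ (consistently with the exponent $\mp2$ in the hypothesis) and $H=h(t+t_{0})^{-1}$, with $h$ a constant fixed by $\beta$. Once $H\propto(t+t_{0})^{-1}$ is established, the remaining Einstein equations of the Bianchi $\mathrm{VI}_{0}$ model — and, since the shear and spatial-curvature contributions then scale as $(t+t_{0})^{-2}$ as well, of any Bianchi or FRW model — integrate at once to $a,b,d\propto(t+t_{0})^{a_{i}}$, subject to constraints of the type (\ref{restrictions}); this is the assertion.

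The delicate point, which I expect to be the main obstacle, is uniqueness rather than the determining-equation bookkeeping: the Klein-Gordon equation with an exponential potential generically admits further Lie symmetries besides the scaling one, and their invariant solutions need not be power-law. This is exactly the phenomenon noted after the direct theorem, where the symmetry $\xi=\alpha t$, $\eta=\delta\phi$ produces the Peebles-Ratra potential $V\approx\phi^{-\alpha}$ of Eq. (\ref{potential2}) but a solution incompatible with a spacetime admitting a HVF. The argument must therefore check that every symmetry branch other than the scaling one fails the full set of field equations — the Hamiltonian constraint together with the evolution equations, not merely the Klein-Gordon equation — so that the only self-consistent possibility is $\phi\propto\ln(t+t_{0})$, hence $H=h(t+t_{0})^{-1}$. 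Feeding each candidate invariant solution into the constraint and verifying its incompatibility is the step that requires real care; everything else reduces to the computation already displayed in the proof of the direct theorem.
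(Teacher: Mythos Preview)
Your approach is essentially the paper's: fix the exponential potential, write the determining system for Lie point symmetries of the Klein--Gordon equation with $H(t)$ now the unknown, and work with the scaling generator $\xi=t$, $\eta=1$. The only procedural difference is that the paper does not substitute the invariant solution back into the Klein--Gordon equation and balance powers of $(t+t_0)$; instead it reads $H=ht^{-1}$ directly off the third determining equation (the analogue of Eq.~(\ref{iga3}), which now contains the combination $H\xi_t-H'\xi$), since with $\xi=t$, $\eta=1$ that equation collapses to a first-order ODE for $H$ alone.

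On the uniqueness issue you correctly flag: the paper does not resolve it either. Its argument shows only that \emph{if} the Klein--Gordon equation with this potential is to admit the scaling symmetry, then $H$ must be $ht^{-1}$; the paper explicitly remarks immediately after the proof that the conclusion means only a power-law behaviour for the scale factors, not self-similarity. So your proposed extra step of feeding the other symmetry branches into the full Einstein equations would go beyond what the paper actually establishes; for the purposes of matching the paper you can stop once the determining equation forces $H=ht^{-1}$.
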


\begin{proof}
As above we perform the proof by using the Lie group method. In this case we
must study the following ODE%
\[
\phi^{\prime\prime}+\phi^{\prime}H+\frac{d}{d\phi}V=0,
\]
where $V=\beta\exp\left(  \mp2\phi\right)  ,$ then we shall study the
different forms for the function $H(t)$ in order to get and integrable ODE.

We have the next system of PDEs%
\begin{align}
\xi_{\phi\phi}  &  =0,\\
2H\xi_{\phi}+\eta_{\phi\phi}-2\xi_{t\phi}  &  =0,\\
-6e^{-2\phi}\xi_{\phi}+H\xi_{t}-H^{\prime}\xi+2\eta_{t\phi}-\xi_{tt}  &
=0,\label{iveta}\\
4e^{-2\phi}\eta+\eta_{tt}-4e^{-2\phi}\xi_{t}+2e^{-2\phi}\eta_{\phi}+H\eta_{t}
&  =0,
\end{align}
As we can easily see, the symmetry $\xi=t,\eta=1,$ brings us to get $\phi=\ln
t,$ as invariant solution, and from Eq. (\ref{iveta}) we obtain the result
i.e. $H=ht^{-1},$ $h\in\mathbb{R}.$
\end{proof}

Obviously this result does not mean that the solution must be self-similar, it
only means that the scale factor must follow a power-law solution i.e. they
are of the form $a_{i}(t)=a_{0}t^{a_{j}},$ with $a_{j}\in\mathbb{R}^{+}.$

\subsection{Scalar model}

We write the FE in the following form:%
\begin{align}
\frac{a^{\prime}}{a}\frac{b^{\prime}}{b}+\frac{a^{\prime}}{a}\frac{d^{\prime}%
}{d}+\frac{d^{\prime}}{d}\frac{b^{\prime}}{b}-\left(  2+\frac{b^{2}}{d^{2}%
}+\frac{d^{2}}{b^{2}}\right)  \frac{m^{2}}{4a^{2}}  &  =8\pi G\rho_{\phi},\\
\frac{b^{\prime\prime}}{b}+\frac{d^{\prime\prime}}{d}+\frac{d^{\prime}}%
{d}\frac{b^{\prime}}{b}+\left(  2+\frac{b^{2}}{d^{2}}+\frac{d^{2}}{b^{2}%
}\right)  \frac{m^{2}}{4a^{2}}  &  =-8\pi Gp_{\phi},\\
\frac{a^{\prime\prime}}{a}+\frac{b^{\prime\prime}}{b}+\frac{a^{\prime}}%
{a}\frac{b^{\prime}}{b}-\left(  2+\frac{3d^{2}}{b^{2}}-\frac{b^{2}}{d^{2}%
}\right)  \frac{m^{2}}{4a^{2}}  &  =-8\pi Gp_{\phi},\\
\frac{b^{\prime\prime}}{b}-\frac{d^{\prime\prime}}{d}+\frac{a^{\prime}}%
{a}\frac{b^{\prime}}{b}-\frac{a^{\prime}}{a}\frac{d^{\prime}}{d}+m^{2}\left(
\frac{b^{2}}{d^{2}a^{2}}-\frac{d^{2}}{b^{2}a^{2}}\right)   &  =0,\\
\frac{d^{\prime\prime}}{d}+\frac{a^{\prime\prime}}{a}+\frac{a^{\prime}}%
{a}\frac{d^{\prime}}{d}-\left(  2+\frac{3b^{2}}{d^{2}}-\frac{d^{2}}{b^{2}%
}\right)  \frac{m^{2}}{4a^{2}}  &  =-8\pi Gp_{\phi},\\
\phi^{\prime\prime}+\phi^{\prime2}H+\frac{d}{d\phi}V  &  =0.
\end{align}
By assuming the potential given by Eq. (\ref{potential1}) it is possible to
find the next set of solutions
\begin{align}
a_{2\pm}  &  =\frac{1}{2}\left(  1\pm\sqrt{1-4m^{2}}\right)  ,\qquad\forall
m\in\left[  -\frac{1}{2},\frac{1}{2}\right]  \backslash\left\{  0\right\}
,\nonumber\\
a(t)  &  =a_{0}\left(  t+t_{0}\right)  ,\quad b(t)=b_{0}\left(  t+t_{0}%
\right)  ^{a_{2\pm}},\quad d=d_{0}\left(  t+t_{0}\right)  ^{a_{2\pm}},
\end{align}
and%
\begin{align}
\alpha_{\pm}  &  =1\pm\sqrt{1-4m^{2}},\quad\beta_{\pm}=\frac{1}{2}\left(
1\pm\sqrt{1-4m^{2}}\right)  ^{2},\\
\phi &  =\pm\sqrt{\alpha_{\pm}}\ln\left(  t+t_{0}\right)  ,\qquad V=\beta
_{\pm}\exp\left(  \mp\frac{2}{\sqrt{\alpha_{\pm}}}\phi\right)  .
\end{align}

As it is observed, we have obtained the same behavior for the scale factor as
the one obtained in the case of a perfect fluid with time-varying constants
model. For this reason, as we already know, we get: $q>0,$ $\forall
m\in\left[  -\frac{1}{2},\frac{1}{2}\right]  \backslash\left\{  0\right\}  ,$
$\mathcal{A}=const\in\left(  0,0.06\right)  \rightarrow0,$ while the Weyl
parameter and the gravitational entropy behaves as $\mathcal{W}^{2}%
=const\in\left(  0,0.0016\right)  \ll1,$ and $P^{2}=const\in\left(
-\infty,0.01\right)  ,$ $\forall m\in\left[  -\frac{1}{2},\frac{1}{2}\right]
\backslash\left\{  0\right\}  .$ Therefore the model does not accelerate but
isotropizes since the quantities ($\mathcal{A}$ and $\mathcal{W}^{2}$) instead
of being constant, they take values very close to zero. With regard to the
quantity $P^{2}$ it is observed that $P^{2}\rightarrow0$, (it takes values
very close to zero) $\forall m\in\left(  -\frac{1}{2},\frac{1}{2}\right)
\backslash\left\{  0\right\}  $ and it only runs to minus infinity when
$m\rightarrow\pm\frac{1}{2}.$

\subsection{Non-interacting scalar and matter fields}

The stress-energy tensor may be written in the following form: $T=T^{m}%
+T^{\phi},$ where the energy density and the pressure of the fluid due a
scalar field are given by Eq. (\ref{STensor}). This describe a non-interacting
dark matter and dark energy cosmological model (we assume that the baryon
component can be ignored). Since the nature of both dark energy and dark
matter is still unknown, there is no physical argument to exclude the possible
non-interaction between them.

We write the FE in the following form:%
\begin{align}
\frac{a^{\prime}}{a}\frac{b^{\prime}}{b}+\frac{a^{\prime}}{a}\frac{d^{\prime}%
}{d}+\frac{d^{\prime}}{d}\frac{b^{\prime}}{b}-\left(  2+\frac{b^{2}}{d^{2}%
}+\frac{d^{2}}{b^{2}}\right)  \frac{m^{2}}{4a^{2}}  &  =8\pi G\left(  \rho
_{m}+\rho_{\phi}\right)  ,\\
\frac{b^{\prime\prime}}{b}+\frac{d^{\prime\prime}}{d}+\frac{d^{\prime}}%
{d}\frac{b^{\prime}}{b}+\left(  2+\frac{b^{2}}{d^{2}}+\frac{d^{2}}{b^{2}%
}\right)  \frac{m^{2}}{4a^{2}}  &  =-8\pi G\left(  \omega\rho_{m}+p_{\phi
}\right)  ,\\
\frac{a^{\prime\prime}}{a}+\frac{b^{\prime\prime}}{b}+\frac{a^{\prime}}%
{a}\frac{b^{\prime}}{b}-\left(  2+\frac{3d^{2}}{b^{2}}-\frac{b^{2}}{d^{2}%
}\right)  \frac{m^{2}}{4a^{2}}  &  =-8\pi G\left(  \omega\rho_{m}+p_{\phi
}\right)  ,\\
\frac{b^{\prime\prime}}{b}-\frac{d^{\prime\prime}}{d}+\frac{a^{\prime}}%
{a}\frac{b^{\prime}}{b}-\frac{a^{\prime}}{a}\frac{d^{\prime}}{d}+m^{2}\left(
\frac{b^{2}}{d^{2}a^{2}}-\frac{d^{2}}{b^{2}a^{2}}\right)   &  =0,\\
\frac{d^{\prime\prime}}{d}+\frac{a^{\prime\prime}}{a}+\frac{a^{\prime}}%
{a}\frac{d^{\prime}}{d}-\left(  2+\frac{3b^{2}}{d^{2}}-\frac{d^{2}}{b^{2}%
}\right)  \frac{m^{2}}{4a^{2}}  &  =-8\pi G\left(  \omega\rho_{m}+p_{\phi
}\right)  ,
\end{align}
and the conservation equations now read
\begin{equation}
\rho_{m}^{\prime}+\left(  \omega+1\right)  \rho_{m}H=0, \label{c1}%
\end{equation}
and%
\begin{equation}
\phi^{\prime\prime}+\phi^{\prime}H+\frac{d}{d\phi}V=0. \label{c2}%
\end{equation}
where $H=h\left(  t+t_{0}\right)  ^{-1},$ and $h=1+2a_{2}.$

In this case we have found the next solutions
\begin{align}
a_{2}  &  =\frac{1-\omega}{2\omega+2}\in\left(  0,1\right)  \qquad m=\frac
{1}{2}\frac{\sqrt{-\left(  3\omega+1\right)  \left(  \omega-1\right)  }%
}{1+\omega}\in\left(  0,\frac{1}{2}\right]  ,\nonumber\\
\beta &  =\alpha a_{2}=\alpha\left(  \frac{1-\omega}{2\omega+2}\right)
\qquad\rho_{0}=\frac{1-\omega-\alpha\left(  1+\omega\right)  }{\left(
1+\omega\right)  ^{2}}>0,
\end{align}
where the constant $\alpha$ must verify the condition: $\alpha>\frac{1-\omega
}{\omega+1}>0,$ $\forall\omega\in\left(  -\frac{1}{3},1\right)  .$ Therefore
we have the following behaviour for the main quantities%
\begin{equation}
\rho_{m}=\rho_{0}\left(  t+t_{0}\right)  ^{-\left(  1+\omega\right)  h},\qquad
p_{m}=\omega\rho_{m},
\end{equation}%
\begin{equation}
a_{1}=a_{0}\left(  t+t_{0}\right)  ,\qquad b=b_{0}\left(  t+t_{0}\right)
^{a_{2}},\qquad d=d_{0}\left(  t+t_{0}\right)  ^{a_{2}},
\end{equation}%
\begin{equation}
\phi=\pm\sqrt{\alpha}\ln\left(  t+t_{0}\right)  ,\qquad V=\beta\exp\left(
\mp\frac{2}{\sqrt{\alpha}}\phi\right)  .
\end{equation}

Since the scale factor behaves as in the perfect fluid solution (see above)
then the deceleration parameter (as we already know) behaves as: $q=\frac
{1}{2}\left(  1+3\omega\right)  >0,$ $\mathcal{A}=\frac{\left(  3\omega
+1\right)  ^{2}}{16}=const.\in\left(  0,1\right)  ,$ $\forall\omega\in\left(
-\frac{1}{3},1\right)  .$ Therefore, with the above restrictions on the
$\omega-$parameter our model does not inflate, $q>0.$ While the Weyl parameter
and the gravitational entropy behave as: $\mathcal{W}^{2}=const\in\left(
0,0.012\right)  ,$ and $P^{2}=const,\in(-\infty,a],$ with $a\rightarrow0^{+}$
(see the above discussion about these quantities).

\subsection{$G-$varying}

We would like to study how the gravitational varies constant when we are
considering only a scalar field. For this purpose, in analogy with the perfect
fluid case (see \cite{Tony3}) and in a phenomenological way, by using the
Bianchi identity $div\left(  8\pi G(t)T_{ij}\right)  =0,$ we propose the
following conservation equation%
\begin{equation}
G\rho^{\prime}+G\left(  \rho+p\right)  H=-G^{\prime}\rho\quad
\Longleftrightarrow\quad\phi^{\prime}\left(  \square\phi+\frac{dV}{d\phi
}\right)  =-\frac{G^{\prime}}{G}\rho_{\phi}, \label{Gv1}%
\end{equation}
which is the modified KG equation.

For this model the FE read%
\begin{align}
\frac{a^{\prime}}{a}\frac{b^{\prime}}{b}+\frac{a^{\prime}}{a}\frac{d^{\prime}%
}{d}+\frac{d^{\prime}}{d}\frac{b^{\prime}}{b}-\left(  2+\frac{b^{2}}{d^{2}%
}+\frac{d^{2}}{b^{2}}\right)  \frac{m^{2}}{4a^{2}}  &  =8\pi G(t)\rho_{\phi
},\\
\frac{b^{\prime\prime}}{b}+\frac{d^{\prime\prime}}{d}+\frac{d^{\prime}}%
{d}\frac{b^{\prime}}{b}+\left(  2+\frac{b^{2}}{d^{2}}+\frac{d^{2}}{b^{2}%
}\right)  \frac{m^{2}}{4a^{2}}  &  =-8\pi G(t)p_{\phi},\\
\frac{a^{\prime\prime}}{a}+\frac{b^{\prime\prime}}{b}+\frac{a^{\prime}}%
{a}\frac{b^{\prime}}{b}-\left(  2+\frac{3d^{2}}{b^{2}}-\frac{b^{2}}{d^{2}%
}\right)  \frac{m^{2}}{4a^{2}}  &  =-8\pi G(t)p_{\phi},\\
\frac{b^{\prime\prime}}{b}-\frac{d^{\prime\prime}}{d}+\frac{a^{\prime}}%
{a}\frac{b^{\prime}}{b}-\frac{a^{\prime}}{a}\frac{d^{\prime}}{d}+m^{2}\left(
\frac{b^{2}}{d^{2}a^{2}}-\frac{d^{2}}{b^{2}a^{2}}\right)   &  =0,\\
\frac{d^{\prime\prime}}{d}+\frac{a^{\prime\prime}}{a}+\frac{a^{\prime}}%
{a}\frac{d^{\prime}}{d}-\left(  2+\frac{3b^{2}}{d^{2}}-\frac{d^{2}}{b^{2}%
}\right)  \frac{m^{2}}{4a^{2}}  &  =-8\pi G(t)p_{\phi},\\
\phi^{\prime\prime}+H\phi^{\prime}+\frac{dV}{d\phi}  &  =-\frac{G^{\prime}}%
{G}\frac{1}{\phi^{\prime}}\rho_{\phi}.
\end{align}

In order to solve the FE we need to solve Eq. (\ref{Gv1}). To that end, we
shall study it through the LG method. Eq. (\ref{Gv1}) could be rewritten in
the following form%
\begin{equation}
\phi^{\prime\prime}\phi^{\prime}+ht^{-1}\phi^{\prime2}+\frac{dV}{d\phi}%
\phi^{\prime}+\rho_{\phi}\frac{G^{\prime}}{G}=0, \label{lisa}%
\end{equation}
where $H=ht^{-1}.$ For simplicity, and without lost of generality, we consider
$H=ht^{-1}$ instead of its non-singular form. As above, we are seeking the
forms of $V\left(  \phi\right)  $ and $G(t)$ for which our field equations
admit symmetries and therefore they are integrable

As above, in order to study the possible solutions to Eq. (\ref{lisa}) we
apply the LG method, where the standard procedure brings us to get the
following system of PDEs%
\begin{align}
\xi_{\phi\phi}  &  =0,\label{zoe1}\\
\eta_{\phi\phi}-2\xi_{t\phi}+\left(  2ht^{-1}+\frac{G^{\prime}}{G}\right)
\xi_{\phi}  &  =0,\\
2\eta_{t\phi}+\left(  \frac{1}{2}\left(  \frac{G^{\prime\prime}}{G}-\left(
\frac{G^{\prime}}{G}\right)  ^{2}\right)  -ht^{-2}\right)  \xi+3V_{\phi}%
\xi_{\phi}+\left(  ht^{-1}+\frac{G^{\prime}}{2G}\right)  \xi_{t}-\xi_{tt}  &
=0,\label{zoe3}\\
\eta_{tt}+V_{\phi\phi}\eta+4\frac{G^{\prime}}{G}V\xi_{\phi}+\left(
ht^{-1}+\frac{G^{\prime}}{2G}\right)  \eta_{t}+2V_{\phi}\xi_{t}-V_{\phi}%
\eta_{\phi}  &  =0,\label{zoe4}\\
\left(  \frac{G^{\prime\prime}}{G}-\left(  \frac{G^{\prime}}{G}\right)
^{2}\right)  V\xi+\frac{G^{\prime}}{G}V_{\phi}\eta+3\frac{G^{\prime}}{G}%
V\xi_{t}-2\frac{G^{\prime}}{G}V\eta_{\phi}  &  =0,\label{zoe5}\\
\frac{G^{\prime}}{G}V\eta_{t}  &  =0, \label{zoe6}%
\end{align}
where, $V_{\phi}=\frac{dV}{d\phi}.$ The following symmetry%
\begin{equation}
\xi=\frac{-t}{\alpha},\qquad\eta=\phi\qquad\Longrightarrow\qquad
\phi=t^{-\alpha}%
\end{equation}
then we obtain the following restrictions, from Eqs. (\ref{zoe3}-\ref{zoe5}).
From Eq. (\ref{zoe3}) we get%
\begin{equation}
G^{\prime\prime}=\frac{G^{\prime^{2}}}{G}-\frac{G^{\prime}}{t}\qquad
\Longrightarrow\qquad G=\kappa_{1}t^{g},
\end{equation}
while from Eq. (\ref{zoe4}) we obtain
\begin{equation}
V_{\phi\phi}\phi-\left(  \frac{2}{\alpha}+1\right)  V_{\phi}=0\qquad
\Longrightarrow\qquad V=\kappa_{2}\phi^{2\left(  \frac{1}{\alpha}+1\right)  },
\end{equation}
in this way $V=\kappa_{2}\left(  t+t_{0}\right)  ^{-2\left(  \alpha+1\right)
}.$ So we have found that the main quantities behave as follows
\begin{equation}
\phi=\left(  t+t_{0}\right)  ^{-\alpha},\;V=\kappa_{2}\phi^{2\left(  \frac
{1}{\alpha}+1\right)  }=\kappa_{2}\left(  t+t_{0}\right)  ^{-2\left(
\alpha+1\right)  },\;G=\kappa_{1}\left(  t+t_{0}\right)  ^{g}, \label{Th2}%
\end{equation}
such that $g-2\left(  \alpha+1\right)  =-2,$ and therefore $g=2\alpha.$ Note
that we may redefine the constants in order to get $V\thickapprox\phi
^{-\alpha}.$

\begin{theorem}
The only compatible form for the potential $V\left(  \phi\right)  $ with the
FE for a spacetime admitting a HFV, $HO,$ where $G=G(t)$, is $V(\phi
)=V_{0}\phi^{-\alpha}$ and therefore $\phi=\left(  t+t_{0}\right)  ^{\beta},$
and $G=\kappa_{1}\left(  t+t_{0}\right)  ^{g},$ with $\alpha,\beta
,g\in\mathbb{R}.$
\end{theorem}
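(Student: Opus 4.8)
The plan is to follow verbatim the strategy of the two preceding theorems: recast the modified Klein--Gordon equation (\ref{Gv1}) in the polynomial form (\ref{lisa}), with $H = h\,t^{-1}$, as a single second-order ODE for $\phi(t)$ in which $V(\phi)$ and $G(t)$ appear as undetermined coefficient functions, and then demand that this ODE admit a Lie point symmetry. First I would write down the determining system for the infinitesimal generator $\mathbf{v} = \xi(t,\phi)\,\partial_t + \eta(t,\phi)\,\partial_\phi$; this is precisely the overdetermined set (\ref{zoe1})--(\ref{zoe6}) displayed above. Equation (\ref{zoe1}) forces $\xi$ to be affine in $\phi$, and the surviving equations couple $\xi,\eta$ to the logarithmic derivatives of $G$ and to $V_\phi, V_{\phi\phi}$.

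Since we want a self-similar (power-law) solution, the pertinent symmetry is the scaling one, so I would impose $\xi = -t/\alpha$, $\eta = \phi$, whose invariant curve is $\phi = t^{-\alpha}$ (the multiplicative constant being absorbed into the later redefinitions). Substituting this choice into (\ref{zoe3}) collapses it to the ODE $G'' = (G')^2/G - G'/t$, integrated at once to the power law $G = \kappa_1 t^{g}$; feeding that back into (\ref{zoe4}) yields $\phi V_{\phi\phi} - (2/\alpha + 1)V_\phi = 0$, hence $V = \kappa_2\,\phi^{\,2(1/\alpha+1)}$; and (\ref{zoe5})--(\ref{zoe6}) are then identically satisfied. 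Along the orbit this reads $V = \kappa_2\, t^{-2(\alpha+1)}$.

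To close, I would invoke consistency with the Einstein equations for the metric (\ref{BVIo-metric}): every curvature term there scales as $(t+t_0)^{-2}$, so the source must too, giving $g - 2(\alpha+1) = -2$ and therefore $g = 2\alpha$. Passing to the non-singular form $t \mapsto t+t_0$ and relabelling the free exponents (writing $-\alpha$ for the exponent $2(1/\alpha+1)$ and $\beta$ for $-\alpha$) produces exactly $V(\phi) = V_0\,\phi^{-\alpha}$, $\phi = (t+t_0)^{\beta}$, $G = \kappa_1 (t+t_0)^{g}$, as asserted.

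The step I expect to be the real obstacle is the word \textbf{only}: one must verify that, up to the standard redefinitions, the scaling symmetry used above is the sole one whose invariant solution is genuinely compatible with the \emph{full} system of field equations, not merely with the modified KG equation in isolation. Exactly as in the remark on the Peebles--Ratra potential (\ref{potential2}), the ODE admits further symmetries, but the solutions they generate fail to solve the remaining metric equations; discarding those is where the care is needed. The remaining difficulty is purely clerical --- organizing the reduction of the six determining equations (\ref{zoe1})--(\ref{zoe6}) once both $G(t)$ and $V(\phi)$ are unknown --- and is routine.
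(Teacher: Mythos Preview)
Your proposal is correct and follows essentially the same route as the paper: the paper likewise writes the modified Klein--Gordon equation in the form (\ref{lisa}), sets up the determining system (\ref{zoe1})--(\ref{zoe6}), imposes the scaling symmetry $\xi=-t/\alpha$, $\eta=\phi$, and reads off $G=\kappa_1 t^{g}$ from (\ref{zoe3}), $V=\kappa_2\phi^{2(1/\alpha+1)}$ from (\ref{zoe4}), and the relation $g=2\alpha$ from the homogeneity of the field equations. Your caveat about the word ``only'' is well taken; the paper treats it in the same spirit as in the discussion following Theorem~1 (and in Appendix~B via matter collineations), i.e.\ by noting that other admissible symmetries of the ODE generate solutions that are not compatible with the remaining field equations.
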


Taking into account all these results we have found the next solution%
\begin{align}
a_{2}  &  =\frac{1}{2}\left(  1\pm\sqrt{1-4m^{2}}\right)  ,\qquad\forall
m\in\left[  -\frac{1}{2},\frac{1}{2}\right]  \backslash\left\{  0\right\}  ,\\
\alpha &  =\alpha>0,\;G_{0}=const.>0,\;\beta=\frac{\left(  1-2m^{2}\right)
}{G_{0}}+\sqrt{\left(  1-4m^{2}\right)  },\nonumber
\end{align}
so the behaviour of the main quantities is the following one%
\begin{equation}
a_{1}=a_{0}\left(  t+t_{0}\right)  ,\quad b=b_{0}\left(  t+t_{0}\right)
^{a_{2}},\quad d=d_{0}\left(  t+t_{0}\right)  ^{a_{2}},
\end{equation}%
\begin{equation}
\phi=\phi_{0}\left(  t+t_{0}\right)  ^{-\alpha},\qquad V=\beta\left(
t+t_{0}\right)  ^{-2\left(  \alpha+1\right)  },\qquad G=G_{0}\left(
t+t_{0}\right)  ^{2\alpha}.
\end{equation}

Notice that this is the same solution for the scale factor than in the above
models. Therefore, with the above restrictions on the $m-$parameter our model
does not inflate $q>0.$ i.e. the model does not accelerate but isotropizes
since $\mathcal{W}^{2}\rightarrow0$ as well as $\mathcal{A}.$ With regard to
the quantity $P^{2}$ it is observed that $P^{2}\rightarrow0$, $\forall
m\in\left(  -\frac{1}{2},\frac{1}{2}\right)  \backslash\left\{  0\right\}  $
and it only runs to minus infinity when $m\rightarrow\pm\frac{1}{2}.$ With
regard to the gravitational constant $G,$ we have obtained that it is an
increasing time function since $\alpha>0$.

\subsection{$G$ variable with matter and a scalar field}

We start by rewriting the stress-energy tensor in the following way,
$T^{ij}=T_{m}^{ij}+T_{\phi}^{ij}$, where $T^{ij}=(\tilde{p}+\tilde{\rho}%
)u^{i}u^{j}+\tilde{p}g^{ij},$ and $\tilde{\rho}=\rho_{m}+\rho_{\phi}$ and
$\tilde{p}=p_{m}+p_{\phi},$ and taking into account the Bianchi identity
\begin{equation}
div\left(  8\pi G(t) T_{ij}\right)  =0\quad\Longleftrightarrow\quad
G\tilde{\rho}^{\prime}+G\left(  \tilde{p}+\tilde{\rho}\right)  H=-G^{\prime
}\tilde{\rho},
\end{equation}
i.e.%
\begin{equation}
\rho_{m}^{\prime}+\left(  \rho_{m}+p_{m}\right)  H+\phi^{\prime}\left(
\square\phi+\frac{dV}{d\phi}\right)  =-\frac{G^{\prime}}{G}\left(  \rho
_{m}+p_{\phi}\right)  . \label{mar}%
\end{equation}

We may study Eq. (\ref{mar}) in several ways. One of them, maybe the simplest
one, may be spliting it into%
\begin{align}
\rho_{m}^{\prime}+\left(  \rho_{m}+p_{m}\right)  H  &  =-\frac{G^{\prime}}%
{G}\rho_{m},\\
\phi^{\prime}\left(  \square\phi+\frac{dV}{d\phi}\right)   &  =-\frac
{G^{\prime}}{G}\rho_{\phi}.
\end{align}
Notice that this approach is similar to a scenario describing an interacting
scalar and matter fields.

There the FE for this model are:%
\begin{align}
\frac{a^{\prime}}{a}\frac{b^{\prime}}{b}+\frac{a^{\prime}}{a}\frac{d^{\prime}%
}{d}+\frac{d^{\prime}}{d}\frac{b^{\prime}}{b}-\left(  2+\frac{b^{2}}{d^{2}%
}+\frac{d^{2}}{b^{2}}\right)  \frac{m^{2}}{4a^{2}}  &  =8\pi G(t)\left(
\rho_{m}+\rho_{\phi}\right)  ,\\
\frac{b^{\prime\prime}}{b}+\frac{d^{\prime\prime}}{d}+\frac{d^{\prime}}%
{d}\frac{b^{\prime}}{b}+\left(  2+\frac{b^{2}}{d^{2}}+\frac{d^{2}}{b^{2}%
}\right)  \frac{m^{2}}{4a^{2}}  &  =-8\pi G(t)\left(  \omega\rho_{m}+p_{\phi
}\right)  ,\\
\frac{a^{\prime\prime}}{a}+\frac{b^{\prime\prime}}{b}+\frac{a^{\prime}}%
{a}\frac{b^{\prime}}{b}-\left(  2+\frac{3d^{2}}{b^{2}}-\frac{b^{2}}{d^{2}%
}\right)  \frac{m^{2}}{4a^{2}}  &  =-8\pi G(t)\left(  \omega\rho_{m}+p_{\phi
}\right)  ,\\
\frac{b^{\prime\prime}}{b}-\frac{d^{\prime\prime}}{d}+\frac{a^{\prime}}%
{a}\frac{b^{\prime}}{b}-\frac{a^{\prime}}{a}\frac{d^{\prime}}{d}+m^{2}\left(
\frac{b^{2}}{d^{2}a^{2}}-\frac{d^{2}}{b^{2}a^{2}}\right)   &  =0,\\
\frac{d^{\prime\prime}}{d}+\frac{a^{\prime\prime}}{a}+\frac{a^{\prime}}%
{a}\frac{d^{\prime}}{d}-\left(  2+\frac{3b^{2}}{d^{2}}-\frac{d^{2}}{b^{2}%
}\right)  \frac{m^{2}}{4a^{2}}  &  =-8\pi G(t)\left(  \omega\rho_{m}+p_{\phi
}\right)  ,\\
\rho_{m}^{\prime}+\left(  \rho_{m}+p_{m}\right)  H  &  =-\frac{G^{\prime}}%
{G}\rho_{m},\\
\phi^{\prime\prime}+H\phi^{\prime}+\frac{dV}{d\phi}  &  =-\frac{G^{\prime}}%
{G}\frac{1}{\phi^{\prime}}\left(  \frac{1}{2}\phi^{\prime2}+V(\phi)\right)  .
\end{align}

With a potential given by Eq. (\ref{Th2}) we have found the next solution%
\begin{align}
a_{2}  &  =\frac{1}{2}\left(  1\pm\sqrt{1-4m^{2}}\right)  \qquad\forall
m\in\left[  -\frac{1}{2},\frac{1}{2}\right]  \backslash\left\{  0\right\}
,\qquad\alpha=\alpha>0,\nonumber\\
\beta &  =\frac{3\omega\left(  1\pm\sqrt{\left(  1-4m^{2}\right)  }\right)
+G_{0}\alpha\left(  1-\omega\right)  +1\pm\sqrt{\left(  1-4m^{2}\right)
}-4m^{2}\left(  1+\omega\right)  }{2G_{0}\left(  \omega+1\right)
},\nonumber\\
\rho_{0}  &  =\frac{1\pm\sqrt{\left(  1-4m^{2}\right)  }-G_{0}\alpha^{2}%
}{G_{0}\left(  \omega+1\right)  },\qquad G_{0}=const.>0,
\end{align}
thus%
\begin{equation}
\rho_{m}=\rho_{0}\left(  t+t_{0}\right)  ^{-\left(  \left(  \omega+1\right)
h+2\alpha\right)  },\qquad p_{m}=\omega\rho_{m},
\end{equation}%
\begin{equation}
a_{1}=a_{0}\left(  t+t_{0}\right)  ,\qquad b=b_{0}\left(  t+t_{0}\right)
^{a_{2}},\qquad d=d_{0}\left(  t+t_{0}\right)  ^{a_{2}},
\end{equation}%
\begin{equation}
\phi=\phi_{0}\left(  t+t_{0}\right)  ^{-\alpha},\quad V=\beta\left(
t+t_{0}\right)  ^{-2\left(  \alpha+1\right)  },\; G=G_{0}\left(
t+t_{0}\right)  ^{2\alpha}.
\end{equation}

Therefore, as in the last model, \ with the above restrictions on the
$m-$parameter our model does not accelerate but isotropizes since
$\mathcal{W}^{2}\rightarrow0$ as well as $\mathcal{A}.$ With regard to the
quantity $P^{2}$ it is observed that $P^{2}\rightarrow0$, $\forall m\in\left(
-\frac{1}{2},\frac{1}{2}\right)  \backslash\left\{  0\right\}  $ and only runs
to minus infinity when $m\rightarrow\pm\frac{1}{2}.$ $G$ is an increasing time
function as in the above model.

\section{Scalar-tensor model}

We consider the following field equations for the BD model \cite{Will},
\begin{equation}
R_{ij}-\frac{1}{2}g_{ij}R=\frac{8\pi}{\phi}T_{ij}^{m}+\Lambda\left(
\phi\right)  g_{ij}+\frac{\omega}{\phi^{2}}\left(  \phi_{,i}\phi_{,j}-\frac
{1}{2}g_{ij}\phi_{,l}\phi^{,l}\right)  +\frac{1}{\phi}\left(  \phi
_{;ij}-g_{ij}\square\phi\right)  ,
\end{equation}

\begin{equation}
\square\phi+\frac{1}{2}\phi_{,l}\phi^{,l}\frac{d}{d\phi}\ln\left(
\frac{\omega\left(  \phi\right)  }{\phi}\right)  +\frac{1}{2}\frac{\phi
}{\omega\left(  \phi\right)  }\left(  R+2\frac{d}{d\phi}\left(  \phi
\Lambda\left(  \phi\right)  \right)  \right)  =0.
\end{equation}

The arbitrary functions $\omega\left(  \phi\right)  $ and $\Lambda\left(
\phi\right)  $ distinguish the different scalar-tensor theories of
gravitation. $\Lambda\left(  \phi\right)  $ is a potential function and plays
the role of a cosmological constant, and $\omega\left(  \phi\right)  $ is the
coupling function of the particular theory. $T_{ij}^{m}$ is the matter
stress-energy tensor.

The last equation can be substituted by%
\begin{equation}
\square\phi+\frac{2}{3+2\omega\left(  \phi\right)  }\left(  \phi^{2}%
\frac{d\Lambda}{d\phi}-\phi\Lambda\left(  \phi\right)  \right)  =\frac
{1}{\left(  3+2\omega\left(  \phi\right)  \right)  }\left(  8\pi
T-\frac{d\omega}{d\phi}\phi_{,l}\phi^{,l}\right)  ,
\end{equation}
where $T=T_{i}^{i}$ is the trace of the stress-energy tensor, where we have
assumed $\phi=\phi(t)$, and the derivatives respect $t$ are denoted by a
comma. Furthermore it is verified the following relationship: $divT=0, $
$i.e.$%
\begin{equation}
\rho^{\prime}+\left(  \rho+p\right)  H=0.
\end{equation}

In what follows we shall assume $\omega\left(  \phi\right)  =const$,
$\Lambda=\Lambda\left(  \phi\right)  $. The corresponding field equations with
a perfect fluid for the matter content in the homogeneous line element
(Bianchi $\mathrm{VI}_{0}$ model) will be calculated. Thus the field equations
are%
\begin{align}
\frac{a^{\prime}}{a}\frac{b^{\prime}}{b}+\frac{a^{\prime}}{a}\frac{d^{\prime}%
}{d}+\frac{d^{\prime}}{d}\frac{b^{\prime}}{b}-\left(  2+\frac{b^{2}}{d^{2}%
}+\frac{d^{2}}{b^{2}}\right)  \frac{m^{2}}{4a^{2}}  &  =\frac{8\pi}{\phi}%
\rho-H\frac{\phi^{\prime}}{\phi}+\frac{\omega}{2}\left(  \frac{\phi^{\prime}%
}{\phi}\right)  ^{2}+\Lambda\left(  \phi\right)  ,\\
\frac{b^{\prime\prime}}{b}+\frac{d^{\prime\prime}}{d}+\frac{d^{\prime}}%
{d}\frac{b^{\prime}}{b}+\left(  2+\frac{b^{2}}{d^{2}}+\frac{d^{2}}{b^{2}%
}\right)  \frac{m^{2}}{4a^{2}}  &  =-\frac{8\pi}{\phi}p-\frac{\phi^{\prime}%
}{\phi}\left(  \frac{d^{\prime}}{d}+\frac{b^{\prime}}{b}\right)  -\frac
{\omega}{2}\left(  \frac{\phi^{\prime}}{\phi}\right)  ^{2}-\frac{\phi
^{\prime\prime}}{\phi}+\Lambda\left(  \phi\right)  ,\\
\frac{a^{\prime\prime}}{a}+\frac{b^{\prime\prime}}{b}+\frac{a^{\prime}}%
{a}\frac{b^{\prime}}{b}-\left(  2+\frac{3d^{2}}{b^{2}}-\frac{b^{2}}{d^{2}%
}\right)  \frac{m^{2}}{4a^{2}}  &  =-\frac{8\pi}{\phi}p-\frac{\phi^{\prime}%
}{\phi}\left(  \frac{a^{\prime}}{a}+\frac{b^{\prime}}{b}\right)  -\frac
{\omega}{2}\left(  \frac{\phi^{\prime}}{\phi}\right)  ^{2}-\frac{\phi
^{\prime\prime}}{\phi}+\Lambda\left(  \phi\right)  -\cosh\left(  mx\right)
^{2}\frac{\phi^{\prime}}{\phi}\left(  \frac{d^{\prime}}{d}-\frac{b^{\prime}%
}{b}\right)  ,\\
\frac{b^{\prime\prime}}{b}-\frac{d^{\prime\prime}}{d}+\frac{a^{\prime}}%
{a}\left(  \frac{b^{\prime}}{b}-\frac{d^{\prime}}{d}\right)  +\frac{m^{2}%
}{a^{2}}\left(  \frac{b^{2}}{d^{2}}-\frac{d^{2}}{b^{2}}\right)   &
=\frac{\phi^{\prime}}{\phi}\left(  \frac{b^{\prime}}{b}-\frac{d^{\prime}}%
{d}\right)  ,\\
\frac{d^{\prime\prime}}{d}+\frac{a^{\prime\prime}}{a}+\frac{a^{\prime}}%
{a}\frac{d^{\prime}}{d}-\left(  2+\frac{3b^{2}}{d^{2}}-\frac{d^{2}}{b^{2}%
}\right)  \frac{m^{2}}{4a^{2}}  &  =-\frac{8\pi}{\phi}p-\frac{\phi^{\prime}%
}{\phi}\left(  \frac{a^{\prime}}{a}+\frac{d^{\prime}}{d}\right)  -\frac
{\omega}{2}\left(  \frac{\phi^{\prime}}{\phi}\right)  ^{2}-\frac{\phi
^{\prime\prime}}{\phi}+\Lambda\left(  \phi\right)  -\cosh\left(  mx\right)
^{2}\frac{\phi^{\prime}}{\phi}\left(  \frac{d^{\prime}}{d}-\frac{b^{\prime}%
}{b}\right)  ,
\end{align}%
\begin{align}
\left(  3+2\omega\left(  \phi\right)  \right)  \left(  \frac{\phi
^{\prime\prime}}{\phi}+H\frac{\phi^{\prime}}{\phi}\right)  -2\left(
\Lambda-\phi\frac{d\Lambda}{d\phi}\right)   &  =\frac{8\pi}{\phi}\left(
\rho-3p\right)  ,\\
\rho^{\prime}+\left(  \rho+p\right)  H  &  =0.
\end{align}

Since we are only interested in finding self-similar solutions then, if we
take into account our previous results, i.e. $b=d$, the FE reads%
\begin{align}
2\frac{a^{\prime}}{a}\frac{b^{\prime}}{b}+\left(  \frac{b^{\prime}}{b}\right)
^{2}-\frac{m^{2}}{a^{2}}  &  =\frac{8\pi}{\phi}\rho-H\frac{\phi^{\prime}}%
{\phi}+\frac{\omega}{2}\left(  \frac{\phi^{\prime}}{\phi}\right)  ^{2}%
+\Lambda\left(  \phi\right)  ,\\
2\frac{b^{\prime\prime}}{b}+\left(  \frac{b^{\prime}}{b}\right)  ^{2}%
+\frac{m^{2}}{a^{2}}  &  =-\frac{8\pi}{\phi}p-2\frac{\phi^{\prime}}{\phi}%
\frac{b^{\prime}}{b}-\frac{\omega}{2}\left(  \frac{\phi^{\prime}}{\phi
}\right)  ^{2}-\frac{\phi^{\prime\prime}}{\phi}+\Lambda\left(  \phi\right)
,\\
\frac{a^{\prime\prime}}{a}+\frac{b^{\prime\prime}}{b}+\frac{a^{\prime}}%
{a}\frac{b^{\prime}}{b}-\frac{m^{2}}{a^{2}}  &  =-\frac{8\pi}{\phi}%
p-\frac{\phi^{\prime}}{\phi}\left(  \frac{a^{\prime}}{a}+\frac{b^{\prime}}%
{b}\right)  -\frac{\omega}{2}\left(  \frac{\phi^{\prime}}{\phi}\right)
^{2}-\frac{\phi^{\prime\prime}}{\phi}+\Lambda\left(  \phi\right)  ,\\
\frac{d^{\prime\prime}}{d}+\frac{a^{\prime\prime}}{a}+\frac{a^{\prime}}%
{a}\frac{d^{\prime}}{d}-\frac{m^{2}}{a^{2}}  &  =-\frac{8\pi}{\phi}%
p-\frac{\phi^{\prime}}{\phi}\left(  \frac{a^{\prime}}{a}+\frac{b^{\prime}}%
{b}\right)  -\frac{\omega}{2}\left(  \frac{\phi^{\prime}}{\phi}\right)
^{2}-\frac{\phi^{\prime\prime}}{\phi}+\Lambda\left(  \phi\right)  ,
\end{align}
and the conservation equations%
\begin{align}
\left(  3+2\omega\left(  \phi\right)  \right)  \left(  \frac{\phi
^{\prime\prime}}{\phi}+H\frac{\phi^{\prime}}{\phi}\right)  -2\left(
\Lambda-\phi\frac{d\Lambda}{d\phi}\right)   &  =\frac{8\pi}{\phi}\left(
\rho-3p\right)  ,\\
\rho^{\prime}+\left(  \rho+p\right)  H  &  =0,\qquad\Longleftrightarrow
\qquad\rho=\rho_{0}t^{-\alpha},
\end{align}
where $H=h\left(  t+t_{0}\right)  ^{-1},$ with $h=\left(  1+2a_{2}\right)  ,$
and $\alpha=h\left(  1+\gamma\right)  $, we are taking into account the
equation of state $p=\gamma\rho,$ $\gamma\in(-1,1].$

In order to solve the resulting FE we need to integrate%
\begin{equation}
\left(  3+2\omega\right)  \left(  \frac{\phi^{\prime\prime}}{\phi}+\frac{h}%
{t}\frac{\phi^{\prime}}{\phi}\right)  -2\left(  \Lambda-\phi\frac{d\Lambda
}{d\phi}\right)  =\frac{8\pi}{\phi}\frac{\left(  1-3\gamma\right)  \rho_{0}%
}{t^{\alpha}}, \label{Laurita}%
\end{equation}

We may study this equation through the LG method, i.e. we study the kind of
functions $\Lambda\left(  \phi\right)  $ such that this equation is integrable
in a closed form. We start by rewriting it in an appropriate way%
\begin{equation}
\phi^{\prime\prime}+ht^{-1}\phi^{\prime}-B\left(  \Lambda-\phi\frac{d\Lambda
}{d\phi}\right)  \phi-Ct^{-\alpha}=0, \label{kata2}%
\end{equation}
where%
\begin{equation}
h=\left(  1+2a_{2}\right)  ,\qquad B=\frac{2}{\left(  3+2\omega\right)
},\qquad C=\frac{8\pi\left(  1-3\gamma\right)  \rho_{0}}{\left(
3+2\omega\right)  },
\end{equation}
and we shall deduce that $\alpha=2-n.$

We need to solve the following system of PDEs%
\begin{align}
t^{2}\xi_{\phi\phi}  &  =0,\\
2ht^{-1}\xi_{\phi}+\eta_{\phi\phi}-2\xi_{\phi t}  &  =0,\\
-3\left(  B\phi\left(  \Lambda-\phi\Lambda_{\phi}\right)  +Ct^{-\alpha
}\right)  \xi_{\phi}+ht^{-2}\left(  t\xi_{t}-\xi\right)  +2\eta_{t\phi}%
-\xi_{tt}  &  =0, \label{kata6}%
\end{align}%
\begin{equation}
B\eta\left(  \phi^{2}\Lambda_{\phi\phi}-\left(  \Lambda-\phi\Lambda_{\phi
}\right)  \right)  -2B\phi\left(  \Lambda-\phi\Lambda_{\phi}\right)  \xi
_{t}+\frac{C}{t^{-\alpha}}\left(  \alpha\frac{\xi}{t}-2\xi_{t}\right)
+\left(  B\phi\left(  \Lambda-\phi\Lambda_{\phi}\right)  +C\frac{C}%
{t^{-\alpha}}\right)  \eta_{\phi}+\frac{h}{t}\eta_{t}+\eta_{tt}=0.
\label{kata7}%
\end{equation}

For example, if we impose the symmetry
\begin{equation}
\xi=t,\qquad\eta=n\phi, \label{kata8}%
\end{equation}
brings us to obtain the following restriction on $\Lambda\left(  \phi\right)
.$ From Eq. (\ref{kata7}) we get%
\begin{equation}
B\phi\left(  n\phi^{2}\Lambda_{\phi\phi}-2\left(  \Lambda-\phi\Lambda_{\phi
}\right)  \right)  +Ct^{-\alpha}\left(  \alpha-2+n\right)  =0,
\end{equation}
and therefore, we find as result, that:%
\begin{equation}
n=2-\alpha,
\end{equation}
and%
\begin{equation}
n\phi^{2}\Lambda_{\phi\phi}-2\left(  \Lambda-\phi\Lambda_{\phi}\right)
=0\qquad\Longrightarrow\qquad\Lambda=\phi^{-2/n},
\end{equation}
is a solution. In fact the most general solution is $\Lambda=C_{1}\phi
+C_{2}\phi^{-2/n}.$ This result is valid for all the self-similar Bianchi
models. Furthermore, the symmetry Eq. (\ref{kata8}) brings us to obtain a
particular solution of Eq. (\ref{kata2}) which is given by, $\phi=\phi
_{0}t^{n},$ with $\phi_{0}\in\mathbb{R},$ with $n=2-\alpha,$ and the following
constrains on the numerical constants must be verified: $n(n-1)+hn-2(1+\frac
{2}{n})B-C=0,$ $\alpha=h\left(  1+\gamma\right)  =\left(  1+2a_{2}\right)
\left(  1+\gamma\right)  .$ We would like to emphasize that other solutions
could be obtained with this procedure by imposing other symmetries.

We find the following solution:%
\begin{align}
\phi_{0}  &  =\phi_{0}\qquad\phi_{0}=1,\qquad\qquad\Lambda_{0}=0,\nonumber\\
a_{2}  &  =-\frac{\left(  \gamma-1\right)  \left(  \omega\left(
\gamma-1\right)  -1\right)  }{2\omega\left(  \gamma^{2}-1\right)  +\gamma
-3},\qquad a_{2}=0,\quad\Longleftrightarrow\quad\gamma=1,\nonumber\\
q  &  =\frac{2\left(  \omega\left(  \left(  3\gamma+1\right)  \left(
\gamma-1\right)  \right)  -2\right)  }{4\omega\left(  \gamma-1\right)
+3\gamma-5},\quad q=0,\quad\Longleftrightarrow\quad\gamma=A_{\pm},\nonumber\\
\rho_{0}  &  =-\frac{\left(  3+2\omega\right)  ^{2}\left(  \gamma-1\right)
^{3}}{8\pi\left(  2\omega\left(  \gamma^{2}-1\right)  +\gamma-3\right)
},\quad\rho_{0}=0,\quad\Longleftrightarrow\quad\gamma=1,\nonumber\\
m  &  =\frac{\left(  \gamma-1\right)  \sqrt{-2\left(  3+2\omega\right)
\left(  \omega\left(  \left(  3\gamma+1\right)  \left(  \gamma-1\right)
\right)  -2\right)  }}{2\left(  2\omega\left(  \gamma^{2}-1\right)
+\gamma-3\right)  },\nonumber\\
m  &  =0,\qquad\Longleftrightarrow\qquad\gamma=1\wedge A_{\pm},\nonumber\\
n  &  =\frac{-\left(  3\gamma-1\right)  \left(  \gamma-1\right)  }%
{2\omega\left(  \gamma^{2}-1\right)  +\gamma-3},\quad n=0,\quad
\Longleftrightarrow\quad\gamma=1\wedge\frac{1}{3},
\end{align}
where $A_{\pm}=\frac{\left(  \omega\pm\sqrt{2\omega\left(  2\omega+3\right)
}\right)  }{3\omega}.$

We get for the BD parameter $\omega$ that according to solar system
experiments is $\omega\thickapprox500.$ (see \cite{46}). A better estimation
of this parameter should be obtained from measure of other cosmological
parameters in order to constrain $\omega$ more strongly than by means of solar
system experiments (see \cite{47}). However, theories of the very early
Universe such as string theory, are better described in the context of JBD,
which shows that $\omega$ can take negative values (see for example
\cite{48}). A recent value for $\omega$ is $\omega\thickapprox3300$
\cite{Faraoni}.

If we fix $\omega=3300,$ then we get the following results%
\begin{align}
A_{+}  &  =1.0002,\quad A_{-}=-0.33348,\quad m>0,\quad\forall\gamma\in
(A_{-},1],\nonumber\\
a_{2}  &  \geq0,\quad\forall\gamma\in(-1,1],\qquad q=\left\{
\begin{array}
[c]{cc}%
<0 & \forall\gamma<A_{-}\\
=0 & \gamma=A_{-}\\
>0 & \forall\gamma>A_{-}%
\end{array}
\right.  ,\nonumber\\
\rho_{0}  &  \geq0,\quad\forall\gamma\in(-1,1],\quad n=\left\{
\begin{array}
[c]{cc}%
>0 & \forall\gamma<1/3\\
=0 & \gamma=1/3\wedge1\\
<0 & \forall\gamma\in\left(  \frac{1}{3},1\right)
\end{array}
\right.  .
\end{align}
Therefore this solution, with $\omega=3300$, is only valid $\forall\gamma
\in(A_{-},1].$ Note that if $\gamma<A_{-}$ then $m$ is not defined. This means
that%
\begin{align}
a  &  =a_{0}\left(  t+t_{0}\right)  ,\;b=b_{0}\left(  t+t_{0}\right)  ^{a_{2}%
},\;d=d_{0}\left(  t+t_{0}\right)  ^{a_{2}},\;H=h\left(  t+t_{0}\right)
^{-1},\nonumber\\
\rho &  =\rho_{0}\left(  t+t_{0}\right)  ^{-\alpha},\quad\phi=\phi_{0}\left(
t+t_{0}\right)  ^{n},\quad\Lambda=0,
\end{align}
and therefore the scale factors are increasing time functions, the energy
density is a positive time decreasing function. The solution does not inflate
since $q>0$ $\forall\gamma\in(A_{-},1].$ $\phi$ is a positive growing time
function if $\gamma\in(A_{-},1/3)$, it is constant if $\gamma=1/3$ and it
behaves as a decreasing time function if $\gamma\in(1/3,1),$ if $\gamma=1$
then it behaves as a constant. This means that $G$ is a decreasing time
function if $\gamma\in(A_{-},1/3),$ it behaves as a true constant if
$\gamma=1/3\wedge1$ and if $\gamma\in(1/3,1)$ then it is a growing time
function. The cosmological \textquotedblleft constant\textquotedblright,
$\Lambda$ vanishes. Notice that $\alpha=\frac{\left(  4\omega\left(
\gamma-1\right)  +3\gamma-5\right)  (\gamma+1)}{2\omega\left(  \gamma
^{2}-1\right)  +\gamma-3}.$

In order to check if the solution isotropize we compute the quantities
$\mathcal{A}$ and $\mathcal{W}^{2}$, obtaining
\begin{align}
\mathcal{A}  &  =\frac{\left(  \omega\left(  \gamma^{2}-2\gamma-1\right)
-2\right)  ^{2}}{\left(  4\omega\left(  \gamma-1\right)  +3\gamma-5\right)
^{2}}=const\in\left(  0,1\right)  ,\\
\mathcal{W}^{2}  &  =-\frac{\left(  \gamma-1\right)  ^{2}\left(  \omega\left(
\gamma^{2}-2\gamma-1\right)  -2\right)  ^{2}}{36\left(  4\omega\left(
\gamma-1\right)  +3\gamma-5\right)  ^{4}}\left(  8\omega\left(  \left(
2\gamma+1\right)  \left(  \gamma-1\right)  \right)  -3\gamma^{2}%
+6\gamma-15\right)  ,
\end{align}
$\mathcal{W}^{2}=const\in\left(  0,0.01\right)  $, $\forall\gamma\in(A_{-},1]$
and $\omega=3300.$ $\mathcal{A}(A_{-})=\mathcal{W}^{2}(A_{-})=0.$ With regard
to the gravitational entropy we have obtained the following behaviour:
$P^{2}(A_{-})=0,$ if $\gamma\in\left(  A_{-},-0.2000969530\right)  $ then
$P^{2}>0,$ $P^{2}(-0.2000969530)=0$ and it runs to $-\infty$ $\forall\gamma
\in(-0.2000969530,1]$, so once again, we have checked that this quantity is
not a good definition for gravitational entropy.

\subsection{The particular case $\gamma=1/3.$}

In this case $T=0$ and therefore we find the next solution%
\begin{align}
\phi_{0}  &  =\phi_{0},\quad\phi_{0}=1,\nonumber\\
\Lambda_{0}  &  =\frac{1}{6}\left[  \left(  3+2\omega\right)  \left(
4m+\sqrt{3}\right)  ^{2}\right]  ,\quad\Lambda_{0}=0\;\Longleftrightarrow
\;m=-\frac{\sqrt{3}}{4},\nonumber\\
a_{2}  &  =\frac{\sqrt{3}}{3}\left(  3m+\sqrt{3}\right)  ,\quad a_{2}%
=0\qquad\Longleftrightarrow\quad m=-\frac{\sqrt{3}}{3},\nonumber\\
q  &  =\frac{-2m}{2m+\sqrt{3}},\quad q=\left\{
\begin{array}
[c]{cc}%
>0 & \forall m<0\\
<0 & \forall m>0
\end{array}
\right.  ,\nonumber\\
\rho_{0}  &  =-\frac{1}{16\pi}\left(  32\omega\left(  m+\frac{1}{4}\sqrt
{3}\right)  ^{2}+9+24m\sqrt{3}+44m^{2}\right)  ,\nonumber\\
\alpha &  =\frac{4\sqrt{3}}{3}\left(  2m+\sqrt{3}\right)  ,\nonumber\\
\rho_{0}  &  =0\quad\Longleftrightarrow\quad m=-\frac{\sqrt{3}\left(
4\omega\pm\sqrt{2\omega+3}+6\right)  }{16\omega+22},\nonumber\\
n  &  =-\frac{2\sqrt{3}}{3}\left(  4m+\sqrt{3}\right)  ,\quad n=0\quad
\Longleftrightarrow\quad m=-\frac{\sqrt{3}}{4}.
\end{align}
If we fix $\omega=3300,$ then we get the following results%
\begin{align}
\rho_{0}  &  >0,\qquad\forall m\in I,\qquad\alpha>0,\qquad\forall m\in
I,\nonumber\\
\Lambda_{0}  &  \geq0,\;\forall m\in I,\;\Lambda_{0}=0\;\Longleftrightarrow
\;m_{\Lambda_{0}}=-\frac{\sqrt{3}}{4}=-0.43301,\nonumber\\
a_{2}  &  >0,\;\forall m\in I,\;a_{2}=0\;\Longleftrightarrow\;m=-\frac
{\sqrt{3}}{3}\notin I,\nonumber\\
q  &  >0,\;\forall m\in I,\nonumber\\
n  &  =\left\{
\begin{array}
[c]{cc}%
>0 & \forall m\in\left(  -0.43569,m_{\Lambda_{0}}\right) \\
=0 & m=m_{\Lambda_{0}}\\
<0 & \forall m\in\left(  m_{\Lambda_{0}},-0.43036\right)
\end{array}
\right.  ,
\end{align}
therefore this solution, with $\omega=3300$, is only valid $\forall m\in I,$
where $I=\left(  -0.43569,-0.43036\right)  .$ Note that $m_{\Lambda_{0}}\in
I,$
\begin{equation}
\Lambda_{0}\left(  m_{\Lambda_{0}}\right)  =0,\;a_{2}\left(  m_{\Lambda_{0}%
}\right)  =\frac{1}{4},\;q\left(  m_{\Lambda_{0}}\right)  =1,\;\rho_{0}\left(
m_{\Lambda_{0}}\right)  =1.4921\times10^{-2}.
\end{equation}

This means that%
\begin{align}
a  &  =a_{0}\left(  t+t_{0}\right)  ,\;b=b_{0}\left(  t+t_{0}\right)  ^{a_{2}%
},\;d=d_{0}\left(  t+t_{0}\right)  ^{a_{2}},\;H=h\left(  t+t_{0}\right)
^{-1},\nonumber\\
\rho &  =\rho_{0}\left(  t+t_{0}\right)  ^{-\alpha},\;\phi=\phi_{0}\left(
t+t_{0}\right)  ^{n},\;\Lambda=\Lambda_{0}\left(  t+t_{0}\right)  ^{-2},
\end{align}
and therefore the scale factors are growing time functions, the energy density
is a positive time decreasing function. The solution does not inflate since
$q>0$ $\forall m\in I.$ $\phi$ is a positive growing time function if
$m\in\left(  -0.43569,m_{\Lambda_{0}}\right)  $, it is constant if
$m=m_{\Lambda_{0}}$ and it behaves as a decreasing time function if
$m\in\left(  m_{\Lambda_{0}},-0.43036\right)  $. This means that $G$ is a
decreasing time function if $m\in\left(  -0.43569,m_{\Lambda_{0}}\right)  ,$
it behaves as a true constant if $m=m_{\Lambda_{0}}$ and if $m\in\left(
-0.43569,m_{\Lambda_{0}}\right)  $ then it is a growing time function. The
cosmological \textquotedblleft constant\textquotedblright, $\Lambda$ is a
positive decreasing time function except in $m=m_{\Lambda_{0}}$.

\section{Conclusions}

In this paper we have studied some Bianchi types $\mathrm{VI}_{0}$ (with an
unusual metric) models under the self-similarity hypothesis. We have started
by comparing our results with the \textquotedblleft
classical\textquotedblright\ perfect fluid solution already studied by
Collins, Wainwright and Hsu and other authors \cite{Tony1}. Furthermore, we
have been able to improve the solutions since we have found a non-singular
solution for the scale factors i.e. they behave as $a(t)\sim\left(
t+t_{0}\right)  ^{a_{1}}$. However, the metric employed in this paper is very
restrictive, since it allows us to obtain less solutions than with the usual
one \cite{Tony1}. Nevertheless we have been able to obtain a new solution for
the case of a perfect fluid with time-varying constants. This solution is not
inflationary but it is very close to isotropizing since the quantities
$\mathcal{A}$ and $\mathcal{W}^{2}$ take values very close to zero. In fact,
for an adequate selection of the parameters $\omega$ and $m$ they run to zero.
This solution is valid for all $\omega\in(-1,1]$ and $m\in\left[
-1/2,1/2\right]  .$ In this case we have been able to enlarge the range of
validity for the equation of state and we have shown that if $G$ behaves as a
growing time function then $\Lambda$ is a \textquotedblleft\emph{positive}%
\textquotedblright\ decreasing time function. In the same way, if $G$ is
decreasing then $\Lambda$ behaves as a \textquotedblleft\emph{negative}%
\textquotedblright\ decreasing time function. With regard to the gravitational
entropy, we have come to the conclusion that the quantity $P^{2}$ is not an
acceptable candidate for gravitational entropy along the homothetic
trajectories of any self-similar spacetime (in all the cases studied in this paper).

In the second model we have studied a scalar field. We have started this
section by calculating the potentials compatible with the self-similar
solutions. Inversely, we have proved that for such scalar fields the scale
factor must follow a power law solution. These theorems are very general and
are valid for all Bianchi models. We have studied two cases. From the first
one, with a scalar field alone, we have obtained a solution that is not
inflationary but it could be considered to be very close to isotropize, since,
as above, since the quantities $\mathcal{A}$ and $\mathcal{W}^{2}$ take values
very close to zero. In the second case, we have studied a non-interacting
scalar and matter fields. The solution is not inflationary but isotropize as
in the previous cases, and it is valid $\forall\omega\in(-1/3,1)$ and
$m\in(0,1/2].$ In order to incorporate into this framework a variable $G,$ we
have proposed, in a phenomenological way, a modified Klein-Gordon equation. We
have studied the kind of potential compatible with a self-similar solution and
a variable $G$. Once we have deduced the potential and the scalar field then
we study two cases, a scalar field with a $G-$var and a scalar field with a
matter field. The conservation equation outlined  in this case is quite
similar to the one employed in the case of interacting scalar fields. The
solutions obtained are similar since the scale factor is the same and
therefore they are not inflationists and close to isotropize. In both cases,
$G$ behaves as a positive increasing time function.

In the scalar-tensor model, for simplicity, we have chosen, $\omega\left(
\phi\right)  =const.$ and $\Lambda\left(  \phi\right)  $ playing the role of
an effective cosmological constant. As we have shown, the resulting FE are
quite difficult to study. Nevertheless, since we are only interested in
studying self-similar solutions, we have been able to simplify the FE. We
would like to stress that we have not needed to make any assumption in order
to integrate them. By using the Lie group method we have obtained a possible
form for the dynamical cosmological constant, $\Lambda\left(  \phi\right)
=\phi^{-2/n}=t^{-2},$ since $\phi=\phi_{0}t^{n}.$ In the same way, we
emphasize that this result is valid for all the self-similar Bianchi models.
With this result we have obtained several solutions for the model. We have
considered that the first of the obtained solutions is unphysical since
$\rho_{0}<0$ if the coupling parameter $\omega$ is positive as the recent
observations suggest. Nevertheless if we consider $\omega<0$, as it is
suggested by the string theories \cite{48}, this solution has physical
meaning. In the second solution, $\Lambda=0$ and it is only valid when
$\gamma\in I,$ where $I=(A_{-},1],$ where $A_{-}=-0.33348$ if $\omega=3300$ as
recent experiments suggest \cite{Faraoni}. For such values $\phi$ behaves as a
growing time function if $\gamma\in\left(  A_{-},1/3\right)  ,$ it is constant
if $\gamma=1/3\wedge1$ and it is a positive decreasing time function in the
interval $\gamma\in\left(  1/3,1\right)  .$ Therefore, $G$ is decreasing,
constant and growing in the same intervals. In the same way we have found that
this solution does not inflate since $q>0,$ $\gamma\in I,$ which is unusual.
We may also say that this solution would be considered to be very close to an
isotropy state since the Weyl parameter, $\mathcal{W}^{2}\ll1.$ In fact this
quantity takes values very close to zero $\gamma\in I.$ We have also studied
the particular solution $\gamma=1/3.$ In this case the trace of the
stress-energy tensor vanishes and therefore the conservation equation reduces
to a very simple ODE. This solution is only valid for a very restrictive
interval, $m\in I_{m}=\left(  m_{1},m_{2}\right)  ,$ where $\rho_{0}>0.$ We
have found that in this case $\Lambda$ behaves as a positive decreasing time
function except when $m=m_{\Lambda_{0}}\in I,$ for which, $\Lambda\left(
m_{\Lambda_{0}}\right)  =0.$ In the same way, we have found that $\phi$ is a
growing time function when $m\in\left(  m_{1},m_{\Lambda_{0}}\right)  ,$ it is
constant if $m=m_{\Lambda_{0}}$ and it behaves as a decreasing time function
if $m\in\left(  m_{\Lambda_{0}},m_{2}\right)  .$ $G$ behaves in the inverse
way in the same intervals. This solution does not inflate as the one above
does. Finally we would like to stress the fact that, in this solution, the
exponents, $a_{2},$ $n$ and $\alpha$ are irrational numbers. This could have
implications for the integrability of the FE (see \cite{Yoshida1,Yoshida2}).

\appendix

\section{Matter collineation for the scalar model}

In this section we shall study the matter collineations for the scalar field
following a method employed in \cite{Tony2}. We start by defining a generic
vector field $X=\left(  X_{i}\left(  t,x,y,z\right)  \right)  _{i=1}^{4}%
\in\mathfrak{X}(M).$ The energy-momentum tensor is defined by Eq.
(\ref{PF-tensor}). The metric tensor $g_{ij}$ is defined by Eq.
(\ref{BVIo-metric}). In recent years, much interest has been shown in the
study of matter collineation (MCs) (see for example \cite{Sharif}-\cite{TA}).
A vector field along which the Lie derivative of the energy-momentum tensor
vanishes is called an MC, i.e. ${\mathcal{L}}_{X}T_{ij}=0,$where $X^{i}$ is
the symmetry or collineation vector. Also, assuming the Einstein field
equations, a vector $X^{i}$ generates an MC if ${\mathcal{L}}_{X}G_{ij}=0$. It
is obvious that the symmetries of the metric tensor (isometries) are also
symmetries of the Einstein tensor $G_{ij}$, but this is not necessarily the
case for the symmetries of the Ricci tensor (Ricci collineations) which are
not, in general, symmetries of the Einstein tensor. If $X$ is a Killing vector
(KV) (or a homothetic vector), then ${\mathcal{L}}_{X}T_{ij}=0$, thus every
isometry is also an MC but the converse is not true, in general. Notice that
collineations can be proper (non-trivial) or improper (trivial). A proper MC
is defined to be an MC which is not a KV, or a homothetic vector. Carot et al
(see \cite{ccv}) and Hall et al.(see \cite{hrv}) have noticed some important
general results about the Lie algebra of MCs. Let $M$ be a spacetime manifold.
Then, generically, any vector field $X\in\mathfrak{X}(M) $ which
simultaneously satisfies ${\mathcal{L}}_{X}T_{ab}=0$ ($\Leftrightarrow
{\mathcal{L}}_{X}G_{ab}=0$) and ${\mathcal{L}}_{X}C_{bcd}^{a}=0$ is a
homothetic vector field i.e. $\mathcal{L}_{X}g=2g$.

The usual matter collineation equations read, ${\mathcal{L}}_{X}T_{ij}^{\phi
}=0,$ $T_{ij}^{\phi}$ is given by Eq. (\ref{STensor}), finding, in this case,
that the obtained matter collineation (MC) is:%
\begin{equation}
X=X_{1}\partial_{t}+\left(  X_{1}^{\prime}-X_{1}H_{2}\right)  y\partial
_{y}+\left(  X_{1}^{\prime}-X_{1}H_{2}\right)  z\partial_{z}, \label{mc}%
\end{equation}
which is a proper MC, and where, as it is observed, if $X_{1}=\left(
t+t_{0}\right)  $, then it is regained the usual homothetic vector field (see
Eq. (\ref{HO})) i.e. a improper MC.

For this reason we may also check that the homothetic vector field also
verifies the equation, ${\mathcal{L}}_{HO}T_{ij}=0,$ (note that it is verified
${\mathcal{L}}_{HO}C_{bcd}^{a}=0$) that we may develop as follows:%
\begin{align}
\rho^{\prime}t+2\rho=0,\nonumber\\
aa^{\prime}+taa^{\prime\prime}-t\left(  a^{\prime}\right)  ^{2} =0,\nonumber\\
g_{33}y\left(  \frac{b^{\prime}}{b}+t\frac{b^{\prime\prime}}{b}-t\frac
{b^{\prime2}}{b^{2}}\right)  +g_{34}z\left(  \frac{d^{\prime}}{d}%
+t\frac{d^{\prime\prime}}{d}-t\frac{d^{\prime2}}{d^{2}}\right)  =0,\nonumber\\
g_{34}y\left(  \frac{b^{\prime}}{b}+t\frac{b^{\prime\prime}}{b}-t\frac
{b^{\prime2}}{b^{2}}\right)  +g_{44}z\left(  \frac{d^{\prime}}{d}%
+t\frac{d^{\prime\prime}}{d}-t\frac{d^{\prime2}}{d^{2}}\right)  =0,\nonumber\\
g_{22}\left(  tp^{\prime}+2p-2pt\frac{a^{\prime}}{a}\right)  +tpg_{22}%
^{\prime} =0,\nonumber\\
g_{33}\left(  tp^{\prime}+2p-2pt\frac{b^{\prime}}{b}\right)  +tpg_{33}%
^{\prime}+px\partial_{x}g_{33}\left(  1-t\frac{a^{\prime}}{a}\right)
=0,\nonumber\\
g_{34}\left(  tp^{\prime}+2p-2pt\left(  \frac{b^{\prime}}{b}+\frac{d^{\prime}%
}{d}\right)  \right)  +tpg_{34}^{\prime}+px\partial_{x}g_{34}\left(
1-t\frac{a^{\prime}}{a}\right)  =0,\nonumber\\
g_{44}\left(  tp^{\prime}+2p-2pt\frac{d^{\prime}}{d}\right)  +tpg_{44}%
^{\prime}+px\partial_{x}g_{44}\left(  1-t\frac{a^{\prime}}{a}\right)  =0.
\end{align}

As we can see, actually, the only ODEs that must by satisfied are:%
\begin{equation}
\rho^{\prime}t+2\rho=0,\qquad\left(  p^{\prime}t+2p\right)  =0,
\end{equation}
which are equivalent. Hence%
\begin{equation}
\rho^{\prime}t=-2\rho,\qquad L_{H}\rho=\rho^{\prime}t=-2\rho,
\end{equation}
i.e.%
\begin{equation}
\left(  \phi^{\prime\prime}\phi^{\prime}+\frac{dV(\phi)}{d\phi}\phi^{\prime
}\right)  t+2\left(  \frac{1}{2}\phi^{\prime2}+V(\phi)\right)  =0,
\label{A1ODE}%
\end{equation}
that we may split into%
\begin{equation}
t\left(  \phi^{\prime\prime}\phi^{\prime}\right)  +\phi^{\prime2}=0,\qquad
t\frac{dV}{d\phi}\phi^{\prime}+2V=0,
\end{equation}
where%
\begin{equation}
t\phi^{\prime\prime}+\phi^{\prime}=0\qquad\Longrightarrow\qquad\phi=\kappa\ln
t, \label{A1phi}%
\end{equation}
i.e. $\mathcal{L}_{H}\phi^{\prime}=0.$ With regard to the second equation%
\begin{equation}
\frac{dV}{d\phi}\kappa+2V=0\qquad\Longrightarrow\qquad V=Ke^{-\frac{2}{\kappa
}\phi}, \label{A1V}%
\end{equation}
i.e. $\mathcal{L}_{H}V=-2V$. Note that from Eq. (\ref{A1phi}), $\phi^{\prime
}t=\kappa.$

We also may study the complete equation (\ref{A1ODE}) i.e.%
\begin{equation}
\phi^{\prime\prime}+\frac{dV}{d\phi}+\phi^{\prime}t^{-1}+2V\left(
t\phi^{\prime}\right)  ^{-1}=0, \label{new_ODE}%
\end{equation}
under the LG method. The standard procedure brings us to get the next system
of PDE:%
\begin{align}
t^{2}\xi_{\phi\phi} =0,\\
t^{2}\eta_{\phi\phi}-2t^{2}\xi_{t\phi}+2t\xi_{\phi} =0,\\
2t^{2}\eta_{t\phi}-t^{2}\xi_{tt}+t\xi_{t}+3t^{2}\xi_{\phi}\frac{dV}{d\phi}%
-\xi=0,\\
t^{2}\eta_{tt}+8t\xi_{\phi}V+2t^{2}\xi_{t}\frac{dV}{d\phi}-2t^{2}\eta_{\phi
}\frac{dV}{d\phi}+t\eta_{t}+\eta t^{2}\frac{d^{2}V}{d\phi^{2}}
=0,\label{A1sys1}\\
-2\xi V+6t\xi_{t}V+2t\eta\frac{dV}{d\phi}-4t\eta_{\phi}V =0,\label{A1sys2}\\
t\eta_{t}V =0.
\end{align}

The symmetry, $\xi=\alpha t,\eta=\delta,$ brings us to obtain the following
restriction on the potential (from Eq. (\ref{A1sys1}) and (\ref{A1sys2}))%
\begin{equation}
2\frac{dV}{d\phi}+\frac{d^{2}V}{d\phi^{2}}=0,\qquad2V+\frac{dV}{d\phi}=0,
\end{equation}
and therefore we obtain as solution%
\begin{equation}
V=\exp\left(  -2\phi\right)  \qquad\phi=\ln t.
\end{equation}

Therefore we may state the following theorem.

\begin{theorem}
The only possible form for the potential $V\left(  \phi\right)  $ for a
spacetime admitting a HFV, $HO$ is $V(\phi)=V_{0}\exp\left(  \kappa
\phi\right)  $ and therefore $\phi=\ln t.$
\end{theorem}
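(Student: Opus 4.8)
The plan is to reuse the observation recorded just above the statement: for this Bianchi $\mathrm{VI}_{0}$ metric the homothetic vector field $HO$ of Eq. (\ref{HO}) satisfies $\mathcal{L}_{HO}C^{a}_{bcd}=0$ as well as $\mathcal{L}_{HO}g_{ij}=2g_{ij}$, which, by the general Lie-algebra facts about matter collineations recalled in this appendix (\cite{ccv}, \cite{hrv}), forces $\mathcal{L}_{HO}T^{\phi}_{ij}=0$. So first I would simply expand the eight independent components of $\mathcal{L}_{HO}T^{\phi}_{ij}=0$, using the metric (\ref{BVIo-metric}), the scalar stress--energy tensor (\ref{STensor}) with $\rho^{\phi}=\tfrac12\phi'^{2}+V$ and $p^{\phi}=\tfrac12\phi'^{2}-V$, and the self-similar restrictions $a_{1}=1$, $a_{2}=a_{3}$ (equivalently $b=d$) already established in Eq. (\ref{restrictions}). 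The key point is that, after $b=d$ is imposed, every one of those components collapses — up to nonzero geometric prefactors — to one and the same scalar relation, $\rho'\,t+2\rho=0$, equivalently $p'\,t+2p=0$, i.e. $\mathcal{L}_{H}\rho=-2\rho$ along the timelike part of $HO$. This is the collineation route to what Theorem 1 obtained directly from the Klein--Gordon equation.

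Next I would substitute $\rho^{\phi}$ into this relation to recover Eq. (\ref{A1ODE}),
\[
\left(\phi''\phi'+\frac{dV}{d\phi}\phi'\right)t+2\left(\tfrac12\phi'^{2}+V\right)=0,
\]
and split it into its kinetic and potential parts $t\phi''\phi'+\phi'^{2}=0$ and $t\,V_{\phi}\phi'+2V=0$. The first integrates at once: dividing by $\phi'$ gives $(t\phi')'=0$, hence $t\phi'=\kappa$ and $\phi=\kappa\ln t$ after absorbing the additive constant. Feeding $t\phi'=\kappa$ into the second part turns it into the linear ODE $\kappa V_{\phi}+2V=0$, whose solution is $V=K\exp(-\tfrac{2}{\kappa}\phi)$; relabelling constants, this is exactly $V(\phi)=V_{0}\exp(\kappa\phi)$ together with $\phi=\ln t$, as claimed.

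To make the "only possible form" assertion rigorous I would then run the Lie group analysis on the full second-order equation (\ref{new_ODE}) rather than on the split version, writing out the determining system whose last nontrivial equations are (\ref{A1sys1})--(\ref{A1sys2}), and checking that the symmetry $\xi=\alpha t$, $\eta=\delta$ — the one whose invariant curve is $\phi\propto\ln t$ — forces $2V_{\phi}+V_{\phi\phi}=0$ and $2V+V_{\phi}=0$ simultaneously, whose common solution is again $V\propto e^{-2\phi}$. As in the remark following the analogous Klein--Gordon theorem, one should also note that the remaining admissible Lie symmetries either regenerate this exponential or lead to power-law potentials whose invariant solutions do not satisfy the field equations, so nothing else survives.

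The part I expect to be the real work is the very first step: verifying that $HO$ genuinely is a matter collineation for this unusual metric (in particular $\mathcal{L}_{HO}C^{a}_{bcd}=0$) and then checking that the eight component equations of $\mathcal{L}_{HO}T^{\phi}_{ij}=0$ really do degenerate to the single ODE above once $b=d$ is used — the off-diagonal $(y,z)$ components and the $x$-dependent pieces containing $\partial_{x}g_{33}$, $\partial_{x}g_{34}$, $\partial_{x}g_{44}$ must cancel identically, and it is precisely in that cancellation that the self-similar constraints (\ref{restrictions}) are being spent. Everything after that is elementary ODE integration.
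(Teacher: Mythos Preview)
Your proposal is correct and follows essentially the same route as the paper's Appendix A proof: establish $\mathcal{L}_{HO}T^{\phi}_{ij}=0$, reduce the component equations to the single scalar relation $\rho't+2\rho=0$, split Eq.~(\ref{A1ODE}) into its kinetic and potential parts to obtain $\phi=\kappa\ln t$ and $V=K\exp(-2\phi/\kappa)$, and then back this up with the Lie analysis of the unsplit equation (\ref{new_ODE}) using the symmetry $\xi=\alpha t$, $\eta=\delta$. One small remark: the step you flag as ``the real work'' is lighter than you anticipate, since for any homothetic vector field one has $\mathcal{L}_{HO}C^{a}{}_{bcd}=0$ automatically (the mixed-index Weyl tensor is scale-invariant), and $\mathcal{L}_{HO}G_{ij}=0$ follows directly from $\mathcal{L}_{HO}g_{ij}=2g_{ij}$, so $\mathcal{L}_{HO}T_{ij}=0$ is immediate from the field equations without any metric-specific computation.
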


Sometimes it is interesting to study the symmetries of the tensor $T_{i}%
^{j}\in\mathcal{T}_{1}^{1}(M)$. In this case the matter collineation equations
read $\mathcal{L}_{HO}T_{i}^{j}=0,$ iff $\rho^{\prime}=0,$ and $p^{\prime}=0,$
which is equivalent to%
\begin{equation}
\phi^{\prime\prime}=\pm\frac{dV(\phi)}{d\phi}, \label{ODE3}%
\end{equation}
where as we can see this approach is related with the variational symmetries.

In this case the solution of Eq. (\ref{ODE3}) is the following one
\begin{equation}
t=\int^{\phi}\pm\frac{da}{\sqrt{-2V(a)+C_{1}}}+C_{2}.
\end{equation}

The Lie group methods applied to Eq. (\ref{ODE3}) gives%
\begin{align}
\xi_{\phi\phi} =0,\\
\eta_{\phi\phi}-2\xi_{t\phi} =0,\\
2\eta_{t\phi}-\xi_{tt}+3\xi_{\phi}\frac{dV}{d\phi} =0,\\
\eta_{tt}+2\xi_{t}\frac{dV}{d\phi}-\eta_{\phi}\frac{dV}{d\phi}+\eta\frac
{d^{2}V}{d\phi^{2}} =0, \label{A1sys3}%
\end{align}
where, for example, the symmetry $\xi=t,\eta=1$ brings us to obtain, from Eq.
(\ref{A1sys3}), the following restriction on the potential $V$%
\begin{equation}
2\frac{dV}{d\phi}+\frac{d^{2}V}{d\phi^{2}}=0,
\end{equation}
i.e. a solution like this: $V=\exp\left(  -2\phi\right)  ,$ and therefore,
$\phi=\ln t.$

\section{Matter collineation for the scalar model with $G(t)$}

If the stress-energy tensor stand for a scalar model then it takes the
following form:
\begin{equation}
T_{ij}=\left(  \rho+p\right)  u_{i}u_{j}+pg_{ij},
\end{equation}
where
\[
\rho=\frac{1}{2}\phi^{\prime2}+V(\phi),\qquad p=\frac{1}{2}\phi^{\prime
2}-V(\phi).
\]

Then Eq.%
\[
\mathcal{L}_{HO}\left(  G(t)T_{ij}\right)  =0,
\]
reads
\[
\frac{G^{\prime}}{G}+\frac{\rho^{\prime}}{\rho}=-\frac{2}{t}\qquad
\Longleftrightarrow\qquad G\rho\thickapprox t^{-2},
\]
now reads%
\[
\frac{\rho^{\prime}}{\rho}=-\left(  \frac{2}{t}+\frac{G^{\prime}}{G}\right)
,
\]
i.e.%
\[
\phi^{\prime\prime}=-\frac{dV(\phi)}{d\phi}-\left(  \frac{2}{t}+\frac
{G^{\prime}}{G}\right)  \left(  \frac{1}{2}\phi^{\prime}+\frac{V(\phi)}%
{\phi^{\prime}}\right)  .
\]

We may follow different tactics. The first one consists in studying the whole
equation%
\begin{equation}
\phi^{\prime\prime}=-\frac{dV(\phi)}{d\phi}-\left(  \frac{2}{t}+\frac
{G^{\prime}}{G}\right)  \left(  \frac{1}{2}\phi^{\prime}+\frac{V(\phi)}%
{\phi^{\prime}}\right)  . \label{ODE}%
\end{equation}

The second one will consist in splitting the ODE in the following form (as in
the standard case)%
\begin{align}
\phi^{\prime\prime}  &  =-\left(  \frac{1}{t}+\frac{G^{\prime}}{2G}\right)
\phi^{\prime},\label{mia1}\\
\frac{dV(\phi)}{d\phi}  &  =-\left(  \frac{2}{t}+\frac{G^{\prime}}{G}\right)
\frac{V(\phi)}{\phi^{\prime}}, \label{mia2}%
\end{align}
in such a way that solving (\ref{mia1}) then we will be able to integrate
(\ref{mia2}).

Eq. (\ref{mia1}) has the following solution%
\begin{equation}
\phi=C_{1}\int\frac{dt}{t\sqrt{G(t)}}+C_{2}.
\end{equation}

In the same way Eq. (\ref{mia1}) admits the following symmetries:%
\begin{align*}
\xi_{\phi\phi}  &  =0,\\
2\left(  \frac{1}{t}+\frac{G^{\prime}}{2G}\right)  \xi_{\phi}+\eta_{\phi\phi
}-2\xi_{t\phi}  &  =0,\\
\left(  \frac{1}{t}+\frac{G^{\prime}}{2G}\right)  \xi_{t}+\left(  -\frac
{1}{t^{2}}+\frac{G^{\prime\prime}}{2G}-\frac{G^{\prime2}}{2G^{2}}\right)
\xi+\eta_{t\phi}-2\xi_{tt}  &  =0,\\
\left(  \frac{1}{t}+\frac{G^{\prime}}{2G}\right)  \eta_{t}+\eta_{tt}  &  =0,
\end{align*}
where the symmetry
\begin{equation}
\xi=t,\qquad\eta=-b\phi\qquad\Longrightarrow\qquad\phi=\phi_{0}t^{-b},
\label{sym_f2}%
\end{equation}
brings us to obtain the following constrain on function $G(t):$%
\begin{equation}
G^{\prime\prime}=\frac{G^{\prime2}}{G}-\frac{G^{\prime}}{t}, \label{f2}%
\end{equation}
whose solution is%
\begin{equation}
G=G_{0}t^{k},\qquad k\in\mathbb{R}. \label{sol_f2}%
\end{equation}

From
\begin{equation}
G\rho\thickapprox t^{-2}\qquad\Longrightarrow\qquad G\phi^{\prime
2}\thickapprox t^{-2}\qquad\Longleftrightarrow\qquad k=2b.
\end{equation}

Now, Eq. (\ref{mia2}) yields%
\[
\frac{dV(\phi)}{d\phi}\frac{\phi^{\prime}}{V(\phi)}=-2\left(  1+b\right)
t^{-1},
\]
whose integration gives%
\[
\ln V=-2\left(  1+b\right)  \ln t\qquad\Longleftrightarrow\qquad V=t^{-2b-2},
\]
such that Eq. (\ref{ODE}) is verified and therefore%
\[
V=\phi^{\alpha}=\left(  t^{-\alpha b}\right)  =t^{-2b-2}\qquad
\Longleftrightarrow\qquad\alpha=\frac{2}{b}\left(  b+1\right)  .
\]

\hrulefill

The main quantities behave as follows%
\begin{equation}
\phi=\phi_{0}t^{-b},\qquad G=G_{0}t^{2b},\qquad V=V_{0}t^{-2\left(
b+1\right)  },\qquad H=ht^{-1},\qquad h\in\mathbb{R}.
\end{equation}

\hrulefill

In the same way we also may study the following equation
\begin{equation}
\phi^{\prime\prime}=-\frac{dV(\phi)}{d\phi}-\left(  \frac{2}{t}+\frac
{G^{\prime}}{G}\right)  \left(  \frac{1}{2}\phi^{\prime}+\frac{V(\phi)}%
{\phi^{\prime}}\right)  , \label{Te1}%
\end{equation}
through the Lie group method. Eq. (\ref{Te1}) admits the following symmetries%
\begin{align*}
\xi_{\phi\phi}  &  =0,\\
2\left(  \frac{2}{t}+\frac{G^{\prime}}{G}\right)  \xi_{\phi}+2\eta_{\phi\phi
}-4\xi_{t\phi}  &  =0,\\
6V_{\phi}\xi_{\phi}+\left(  \frac{2}{t}+\frac{G^{\prime}}{G}\right)  \xi
_{t}+\left(  -\frac{2}{t^{2}}+\frac{G^{\prime\prime}}{G}-\frac{G^{\prime2}%
}{G^{2}}\right)  \xi+4\eta_{t\phi}-2\xi_{tt}  &  =0,\\
8\left(  \frac{2}{t}+\frac{G^{\prime}}{G}\right)  V\xi_{\phi}+4V_{\phi}\xi
_{t}-2V_{\phi}\eta_{\phi}+\left(  \frac{2}{t}+\frac{G^{\prime}}{G}\right)
\eta_{t}+2V_{\phi\phi}\eta+2\eta_{tt}  &  =0,\\
\left(  \frac{2}{t}+\frac{G^{\prime}}{G}\right)  V_{\phi}\eta+3\left(
\frac{2}{t}+\frac{G^{\prime}}{G}\right)  V\xi_{t}-2\left(  \frac{2}{t}%
+\frac{G^{\prime}}{G}\right)  V\eta_{\phi}-\left(  -\frac{2}{t^{2}}%
+\frac{G^{\prime\prime}}{G}-\frac{G^{\prime2}}{G^{2}}\right)  V\xi &  =0,\\
\left(  \frac{1}{t}+\frac{G^{\prime}}{2G}\right)  V\eta_{t}  &  =0.
\end{align*}

As above, the symmetry
\begin{equation}
\xi=t,\qquad\eta=-b\phi\qquad\Longrightarrow\qquad\phi=\phi_{0}t^{-b},
\end{equation}
brings us to obtain the following constrain on function $G(t):$%
\begin{align}
\left(  \frac{2}{t}+\frac{G^{\prime}}{G}\right)  +\left(  -\frac{2}{t^{2}%
}+\frac{G^{\prime\prime}}{G}-\frac{G^{\prime2}}{G^{2}}\right)  t  &
=0,\label{aya1}\\
4V_{\phi}+2bV_{\phi}-2bV_{\phi\phi}\phi &  =0,\label{aya2}\\
-b\left(  \frac{2}{t}+\frac{G^{\prime}}{G}\right)  V_{\phi}\phi+3\left(
\frac{2}{t}+\frac{G^{\prime}}{G}\right)  V+2b\left(  \frac{2}{t}%
+\frac{G^{\prime}}{G}\right)  V-t\left(  -\frac{2}{t^{2}}+\frac{G^{\prime
\prime}}{G}-\frac{G^{\prime2}}{G^{2}}\right)  V  &  =0. \label{aya3}%
\end{align}

Eq. (\ref{aya3}) may be rewritten as%
\[
-b\left(  \frac{2}{t}+\frac{G^{\prime}}{G}\right)  V_{\phi}\phi+\left(
\left(  2b+3\right)  \left(  \frac{2}{t}+\frac{G^{\prime}}{G}\right)
-t\left(  -\frac{2}{t^{2}}+\frac{G^{\prime\prime}}{G}-\frac{G^{\prime2}}%
{G^{2}}\right)  \right)  V=0,
\]
while from Eq. (\ref{aya1}) we get%
\begin{equation}
G^{\prime\prime}=\frac{G^{\prime2}}{G}-\frac{G^{\prime}}{t},\qquad
\Longrightarrow\qquad G=G_{0}t^{k},\qquad k\in\mathbb{R}.
\end{equation}

From Eq. (\ref{aya2}) we get%
\begin{equation}
V_{\phi\phi}=\frac{2+b}{b}\frac{V_{\phi}}{\phi}\qquad\Longrightarrow\qquad
V=V_{0}\phi^{\frac{2}{b}\left(  b+1\right)  }=V_{0}t^{-2\left(  b+1\right)  }.
\end{equation}

Notice that we have obtained the same results as in the splitting case.

\hrulefill

The main quantities behave as follows%
\begin{equation}
\phi=\phi_{0}t^{-b},\qquad G=G_{0}t^{2b},\qquad V=V_{0}t^{-2\left(
b+1\right)  },\qquad H=ht^{-1},\qquad h\in\mathbb{R}.
\end{equation}

\hrulefill

\end{document}